\newtheorem{theo}{Theorem}[section]
\newtheorem{notation}[theo]{Notation}
\newtheorem{folg}[theo]{Corollary}
\newtheorem{prop}[theo]{Proposition}
\newtheorem{remark}[theo]{Remark}
\newcommand{\be}{\begin{eqnarray}}
\newcommand{\ee}{\end{eqnarray}}
\newcommand{\bes}{\begin{eqnarray*}}
\newcommand{\ees}{\end{eqnarray*}}
\newcommand{\bi}{\begin{itemize}}
\newcommand{\ei}{\end{itemize}}
\newcommand{\ben}{\begin{enumerate}}
\newcommand{\een}{\end{enumerate}}
\newcommand{\R}{\mathbb{R}}
\newcommand{\N}{\mathbb{N}}
\newcommand{\de}{\mathrm {d}}
\def\Ext{{\hbox{\rm Ext}}}
\def\einschr{\hbox{\kern1pt\vrule height 6pt\vrule  width6pt height 0.4pt depth0pt\kern1pt}}
\DeclareMathOperator{\dive}{div}
\DeclareMathOperator{\supp}{supp}
\DeclareMathOperator{\curl}{curl}
\DeclareMathOperator{\Ker}{Ker}
\DeclareMathOperator{\capac}{cap}
\DeclareMathOperator{\diam}{diam}
\DeclareMathOperator{\vek}{vec}
\title{\bf  Magnetostatic problems in fractal domains  }
\date{}
\begin{document}
\maketitle

\centerline{\scshape Simone Creo, Maria Rosaria Lancia and Paola Vernole}
\medskip
{\footnotesize

 \centerline{Dipartimento di Scienze di Base e Applicate per l'Ingegneria, Universit\`{a} degli studi di Roma Sapienza,
}
   \centerline{Via A. Scarpa 16,}
   \centerline{00161 Roma, Italy.}
} 

\medskip

\centerline{\scshape Michael Hinz}
\medskip
{\footnotesize
 \centerline{Department of Mathematics, Bielefeld University,}
 \centerline{Postfach 100131,}
   \centerline{33501 Bielefeld, Germany.}
}

\medskip

\centerline{\scshape Alexander Teplyaev}
\medskip
{\footnotesize
 \centerline{Department of Mathematics, University of Connecticut,}
 \centerline{341 Mansfield Road U1009,}
   \centerline{06269-1009 Storrs, Connecticut, USA.}
}

\bigskip

\begin{abstract}
\noindent We consider a magnetostatic problem in a 3D ``cylindrical" domain of Koch type. We prove existence and uniqueness results for both the fractal and pre-fractal problems and we investigate the convergence of the pre-fractal solutions to the limit fractal one. We consider the numerical approximation of the pre-fractal problems via FEM and we  {give} a priori error estimates. Some numerical simulations are also shown. Our long term motivation includes studying problems that appear in quantum physics in fractal domains. 

\end{abstract}
\medskip

\noindent\textbf{Keywords:} Fractal surfaces, trace theorems, asymptotic behavior, weighted Sobolev spaces, Finite Element Method, numerical approximation.

\noindent{\textbf{AMS Subject Classification:} Primary: 35J25, 28A80. Secondary: 35K15, 46E35, 47A07, 60J45, 65M15, 65M50, 65M60,  81Q35.}
\tableofcontents

\newpage
\section{Introduction}

The aim of this paper is to study a magnetostatic problem in a fractal domain. Trying to understand the magnetic properties of fractal structures is a new challenge from both the practical and theoretical point of view. In general mathematical physics on fractals is still a young subject, see \cite{A, AkkermansMallick, ADT, ADT2010PRL, ADTV} for some results; magnetic operators on fractal spaces have been studied only very recently, \cite{H14b, HR14, HTc, HKMRS17}, as well as heat transfer across fractal layers or boundaries \cite{La-Ve1,JEE,JMAA,LVSV,CPAA,nostronazarov,AR-P17,R-PGS}. 
Our long term motivation includes a possibility to study   
non-quantized penetration of magnetic field in the vortex state of superconductors
\cite{geim2000non} 
in fractal domains. 

A mathematical theory of electrodynamics on domains with fractal boundary still has to be developed. Although many results are well known in the case of Lipschitz domains, see for instance \cite[Chapter IX]{DautrayLions}, for such fractal domains even the simplest models and effects have not yet been discussed. Our considerations here should be regarded as a preliminary step in a long term project, which aims to provide theoretical and numerical studies of related physical phenomena. We believe that, beyond their theoretical interest, such results may also be useful for the construction of concrete prototypes in industrial applications, which aim to maximize (or minimize) physical quantities such as the intensity of the magnetic vector field induced by a given current density.

In the present paper we consider a linear magnetostatic problem in a cylindrical three-dimensional domain $Q=\Omega\times I$, where $\Omega$ is the two-dimensional snowflake domain with Koch-type boundary $F$ and $I$ is the unit interval. We consider the problem of finding a divergence free magnetic vector potential for given time-independent permeability and time-independent current density, and we assume that the magnetic induction vanishes outside $Q$.

Using trace and extension techniques from \cite{jonsson91} we establish a generalized Stokes formula, see Theorem \ref{stokes}. It involves generalized tangential traces that can be expressed as a limit of tangential traces along the boundaries of ``polyhedral'' approximations. We establish a Friedrichs inequality, Theorem \ref{friedrichs}, and establish existence and uniqueness of weak solutions, Theorem \ref{exandunique}. For the numerical approximation, we restrict ourselves to the axial-symmetric case, which in turn brings us to solve the problem in the snowflake domain. We consider both the fractal and pre-fractal problems, which we denote with $(\bar P)$ and $(\bar P_n)$ respectively. We prove existence and uniqueness of weak solutions (Propositions \ref{propex} and \ref{propexn}) and regularity results (Proposition \ref{propreg1}). We show that, in a suitable sense, the pre-fractal solutions converge to the limit fractal one, see Theorem \ref{convergenza}. We consider the numerical approximation of the pre-fractal problem $(\bar P_n)$ by a FEM scheme. To obtain an optimal a priori error estimate, we rely on the regularity of the weak solution of problem $(\bar P_n)$ in suitable weighted Sobolev spaces, see Theorem \ref{pesata}. Since the pre-fractal domain $\Omega_n$ is not convex, the solution is not in $H^2(\Omega_n)$, hence the rate of convergence is deteriorated. By using a suitable mesh constructed in \cite{cefalolancia}, which is compliant with the so-called \emph{Grisvard conditions} \cite{grisvard}, we can prove optimal a priori error estimates. These conditions involve the weight exponent of the weak solution given in Theorem \ref{pesata}. We finally present numerical simulations, which describe the behavior of the magnetic field. It turns out that the intensity of the magnetic field
increases as the length of the boundary approaches the $\lq\lq$length" of $F\times I$. We believe that this effect may be useful for potential applications.

\section{Fractal domains}\label{geometria}

We write $|P-P'|$ to denote the Euclidean distance between two points $P$ and $P'$ in $\R^N$. The \emph{Koch snowflake} $F\subset \R^2$ is the union $F=\bigcup_{i=1}^3 K^{(i)}$ of three com-planar Koch curves $K^{(1)}$, $K^{(2)}$ and $K^{(3)}$, cf. \cite[Chapter 8]{Fa}, whose junction points
  $A$, $B$ and $C$ are the vertices of a regular triangle. We assume this triangle has unit side length, i.e.
  $|A-B|=|A-C|=|B-C|=1$. 
  
The single \emph{Koch curve} $K^{(1)}$ is the uniquely determined self-similar set with respect to a family $\Psi^{1}$ of four contractive similarities $\psi_{1}^{(1)},...,\psi_{4}^{(1)}$, all having contraction ratio $\frac{1}{3}$, see \cite{Fa, FrLa}. Let $V_0^{(1)}:=\lbrace A, B \rbrace$, $\psi_{i_1\dots i_n}:=\psi_{i_1}\circ\dots\circ\psi_{i_n}$, $V_{i_1\dots i_n}^{(1)}:=\psi_{i_1\dots i_n}^{(1)}(V_0^{(1)})$ and
	\begin{center}
	$V_n^{(1)}:=\bigcup\limits_{i_1\dots i_n=1}^4 V_{i_1\dots i_n}^{(1)}$.
	\end{center}
We write $i|n=(i_1,i_2,\dots,i_n)$ and $V_{\star}^{(1)}:=\cup_{n\geq 0}V_n^{(1)}$. The closure in $\R^N$ of $V_\ast^{(1)}$ is just $K^{(1)}$. Now let $K_0^{(1)}$ denote the unit segment whose endpoints are $A$ and $B$. We set $K_{i_1\dots i_n}^{(1)}=\psi_{i_1\dots i_n}(K_0^{(1)})$ and 
\[K_n^{(1)}:=\bigcup\limits_{i_1\dots i_n=1}^4 K_{i_1\dots i_n}^{(1)}.\]

In a similar way, it is possible to approximate $K^{(2)}$ and $K^{(3)}$ by the sequences $(V_n^{(2)})_{n\geq 0}$ and $(V_n^{(3)})_{n\geq 0}$, we denote their unions by $V_{\star}^{(2)}$ and $V_{\star}^{(3)}$, respectively. The polygonal curves associated with $V_n^{(2)}$ and $V_n^{(3)}$ are denoted by $K_n^{(2)}$ and $K_n^{(3)}$, respectively.

The Koch snowflake $F$ itself is approximated by the sequence $(F_n)_{n\geq 1}$ of ``pre-fractal'' closed polygonal curves $F_n$, defined by 
\begin{equation}\label{eq:3bitris}
F_{n}=\displaystyle\bigcup_{i=1}^3K_n^{(i)},
\end{equation}
see Figure \ref{figura1}.

%

\begin{figure}
\centering
\scalebox{.35}{
\begin{tikzpicture}
\node  [anchor=south west] (label) at (0,0) {\includegraphics[width=350pt,height=380pt]{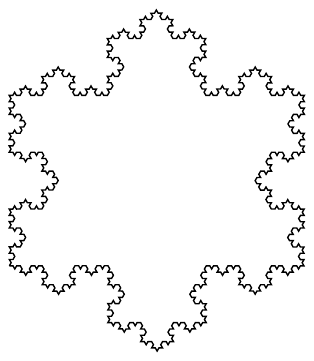}};
\node at (6.3,14) {{\Huge $A$} };
\node at (13,3.4) {{\Huge $B$} };
\node at (-0.7,3.4) {{\Huge $C$} };
\node at (6.3, -0.3) {{\Huge $K^{(2)}_4$} };
\node at (13.3, 10) {{\Huge $K^{(1)}_4$} };
\node at (-0.7, 10) {{\Huge $K^{(3)}_4$} };
\end{tikzpicture}
  }
  \caption{The pre-fractal curve $F_4$.}
  \label{figura1}
\end{figure}


\begin{notation}\label{notation-key}
	By $\Omega_n\subset\R^2$ we denote the bounded open set with boundary $F_n$ and by $Q_n$ the three-dimensional cylindrical domain having $S_n:=F_n\times [0,1]$ as ``lateral surface'' and the sets $\Omega_n\times\{0\}$ and $\Omega_n\times\{1\}$ as bases. We similarly write $\Omega$ for the bounded open domain in $\R^2$ with boundary $F$ (``snowflake domain''), define the cylindrical-type surface $S:=F\times I$ and let $Q$ denote the open cylindrical domain having $S$ as lateral surface and the sets $\Omega\times\{0\}$ and $\Omega\times\{1\}$ as bases, see Figure \ref{fig2}. 
\end{notation}


\begin{figure}
\centering
\includegraphics[width=0.50\textwidth]{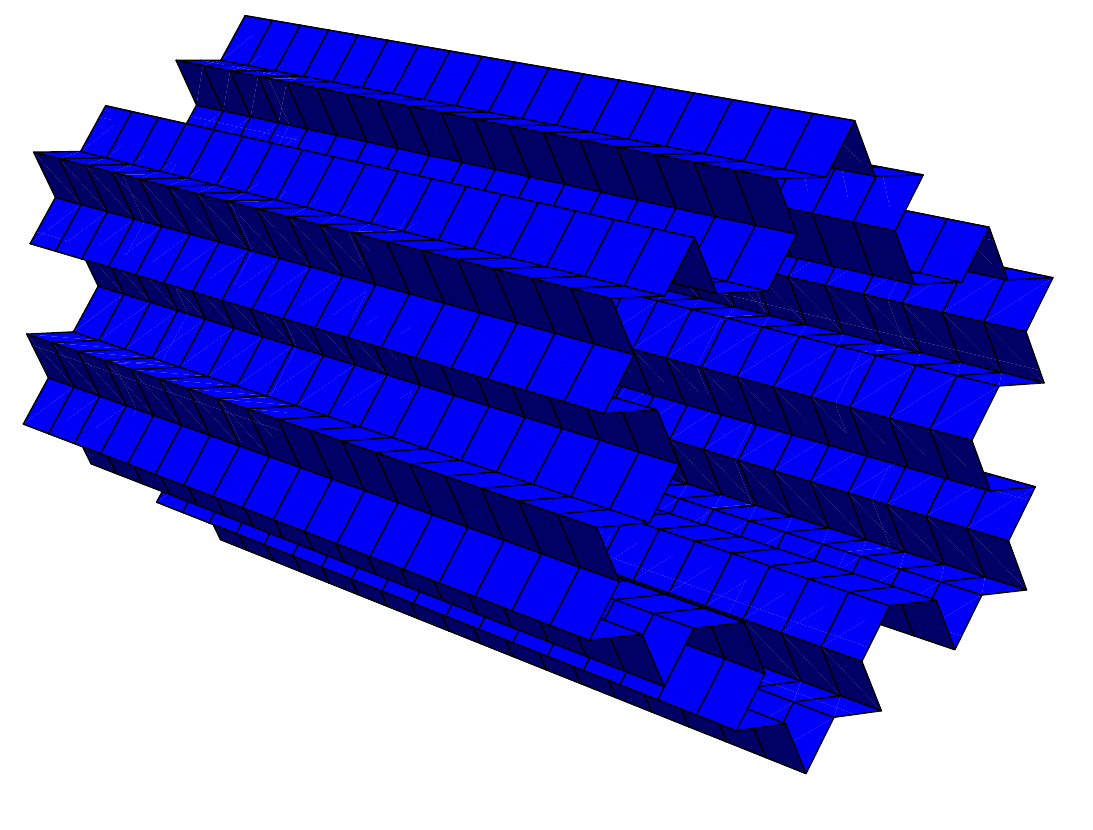}
\caption{The lateral surface $S_2$.}
\label{fig2}
\end{figure}

\section{A 3D magnetostatics problem}\label{modello3d}
\setcounter{equation}{0}
%

We formulate a linear magnetostatic problem on the fractal domain $Q$. To deduce it and to explain its physical meaning
we start by recalling \emph{Maxwell's equations} for classical macroscopic electromagnetic fields. We assume that $Q$ is made up from a \emph{linear} material, i.e. in a material without any magnetization or polarization effects, and we assume it is \emph{dielectric}, i.e. its conductivity can be neglected, see for instance \cite[Section 1.2.1]{Monk03}. Then \emph{Amp\`{e}re's law}, $\curl(\mathcal{H})=\mathcal{J}+\frac{\partial\mathcal{D}}{\partial t}$, tells that the total \emph{magnetic field} $\mathcal{H}$ induced around a closed loop equals the electric current plus the rate of change of the \emph{electric displacement field}  $\mathcal{D}$ enclosed by the loop, here $\mathcal{J}$ denotes the \emph{electric current density}, i.e. the vector field describing the directed flow of electric charges. The corresponding \emph{magnetic induction} is $\mathcal{B}=\mu\mathcal{H}$, where $\mu$ is a positive and bounded scalar function of space and time, called the \emph{permeability} of the material. By \emph{Faraday's law of induction}, $\curl(\mathcal{E})=-\frac{\partial\mathcal{B}}{\partial t}$, the voltage induced in a closed loop equals the change of the enclosed magnetic field. Here $\mathcal{E}=\frac{1}{\varepsilon}\mathcal{D}$, where $\varepsilon$ is a positive and bounded scalar of space and time referred to as the \emph{permittivity} of the material. These assumptions of $\mu$ and $\varepsilon$ mean we model a inhomogeneous isotropic material, so practically $Q$ may consist of a mixture of different materials whose electromagnetic properties may depend on the location in space but not on the direction of the fields. \emph{Gauss' law}, $\dive(\mathcal{D})=\rho$, states that the electric flux leaving a volume equals the charge inside, here $\rho\geq 0$ is the \emph{charge density}. According to \emph{Gauss' law for magnetism}, $\dive(\mathcal{B})=0$, i.e. the magnetic flux through a closed surface is zero.

We now make the following assumptions leading to a much simpler \emph{magnetostatic} setup:
\begin{itemize}
    \item the permittivity $\varepsilon=\varepsilon(x)$ and the permeability $\mu=\mu(x)$ are \emph{time-independent};
    \item the charge density is zero, $\rho=0$;
    \item the current density $\mathcal{J}\equiv {\bf J}(x)$ is \emph{time-independent} and \emph{real valued};
    \item the fields $\mathcal{E}\equiv {\bf E}(x)$ and $\mathcal{H}\equiv {\bf H}(x)$ are \emph{time-independent} and 
    \emph{real valued};
	\item all the fields vanish outside $Q$.	
\end{itemize}
Under these assumptions, Maxwell's equations on $Q$ read
\begin{equation}\label{maxwell2}
\curl(\bf{H})=\bf{J},\quad \curl({\bf E})=0,\quad \dive(\bf{D})=0,\quad \dive(\bf{B})=0,
\end{equation}
where ${\bf D}=\varepsilon{\bf E}$ and ${\bf B}=\mu{\bf H}$. 

Our assumption that ${\bf E}$ vanishes in $Q^c$ means that the surrounding region $Q^c$ is a \emph{perfect conductor}. When passing from one to another medium the parallel component of the electric field should be continuous, this can be seen by taking a small rectangular loop with long sides parallel to $\partial Q$, one inside $Q$, one outside and applying Faraday's law. Since the field vanishes outside $Q$ this forces to impose what is referred to as the \emph{perfectly conducting boundary condition} ${\bf n}\times{\bf E}=0$ on $\partial Q$. 

Since ${\bf B}$ is divergence free, there exists a \emph{magnetic vector potential} ${\bf u}=(u_1,u_2,u_3)$ such that
${\bf B}=\curl({\bf u})$, and we may choose it to be divergence free, $\dive {\bf u}=0$. Note that Gauss' law for magnetism then becomes trivial.

Also ${\bf B}$ is supposed to be zero on $Q^c$. Therefore, looking at the flux of the magnetic field through small closed loops on $\partial Q$, which should not differ for the interior and the exterior field, and applying the Kelvin-Stokes theorem, it follows that we should impose ${\bf n}\times{\bf u}=0$ on $\partial Q$. See for instance \cite[Section 5.4.2]{Griffiths} or \cite[p. 82]{vago}.

We now restrict attention to the magnetic field only and pose the following problem: Given $\mu$ and ${\bf J}$ as above, find a magnetic vector potential ${\bf u}$ that satisfies 
\begin{equation}\label{sistema 3d}
(P)\begin{cases}
\curl\left(\frac{1}{\mu}\curl({\bf u})\right)={\bf J}\quad &\text{in}\,\, Q,\\[2mm]
\dive{\bf u}=0\quad &\text{in}\,\, Q,\\[2mm]
{\bf n}\times{\bf u}=0 &\text{on}\,\,\partial Q.
\end{cases}
\end{equation}
Note that if $\mu$ is constant then the first equation rewrites 
\begin{equation}\label{E:vecLap}
-\Delta_{\vek}{\bf u}=\mu{\bf J},
\end{equation}
 where $\Delta_{\vek}$ denotes the vector Laplacian.

\section{Trace theorems, Stokes formula and Gauss-Green identity}
\setcounter{equation}{0}

We discuss measures, function spaces and trace theorems. The latter allow rigorous definitions of boundary conditions and generalizations of classical integral formulas. We write $B(P,r)=\{P'\in \R^N\,:\,|P'-P|<r \}$, $P \in \R^N, r>0$, for the euclidean ball of radius $r$ centered at $P$. For the two-dimensional Lebesgue measure we write $\de x_1\de x_2$ and for the three-dimensional one we write $\de x=\de x_1\de x_2\de x_3$. 

On the snowflake curve $\displaystyle F=\bigcup_{i=1}^3 K^{(i)}$ we consider the finite Borel measure $\mu$ defined by
\[\mu_F:=\mu_1+\mu_2+\mu_3,\]
where $\mu_i$ denotes the normalized Hausdorff measure of dimension $D_f=\frac{\ln 4}{\ln 3}$, restricted to $K_i$, $i=1,2,3$. It is well known that $c_1r^{D_f}\leq \mu_F(B(P,r))\leq c_2r^{D_f}$, $P\in F$, $r>0$, with positive constants $c_1$ and $c_2$. If we endow the cylindrical type surface $S=F\times I$ with the measure
\[\de \mu_S:=\de\mu_{F}\times\de x_3,\]
where $\de x_3$ is one-dimensional Lebesgue measure on $I$, then clearly 
\begin{equation}\label{E:muF}
c_1r^{D_f+1}\leq \mu_S(B(P,r))\leq c_2r^{D_f+1}
\end{equation}
for all $P\in S$ and $r>0$.

We equip the boundary $\partial Q$ with the measure 
 \begin{equation}\label{e-mu}
\de\mu_{\partial Q}=\chi_S\de \mu_S+\chi_{\tilde\Omega}\de x_1\de x_2,
\end{equation} 
where $\tilde\Omega=(\Omega\times\{0\})\cup(\Omega\times\{1\})$ is the union of the two bases of the cylinder domain $Q$  in Notation~\ref{notation-key}. In particular, $\supp \mu_{\partial Q}=\partial Q$. 

From \eqref{E:muF} and the quadratic scaling of the two-dimensional Lebesgue measure it follows that
\begin{equation}\label{E:scaling}
\mu_{\partial Q}(B(P,kr))\leq c_1\:k^{D_f+1}\mu_{\partial Q}(B(P,r)) \quad \text{and}\quad \mu_{\partial Q}(B(P,kr))\geq c_2\:k^2\mu_{\partial Q}(B(P,r))
\end{equation} 
for all $P\in \partial Q$, $r>0$, $k\geq 1$ such that $kr\leq 1$.

We write $L^2(Q)$ and $L^2(Q_n)$ for the $L^2$-spaces with respect to the three-dimensional Lebesgue measure, the spaces
$L^2(\Omega)$, $L^2(\Omega_n)$, $L^2(\partial Q_n)$ are taken with respect to the two-dimensional Lebesgue respectively Hausdorff measure (depending on whether considered in $\R^2$ or $\R^3$). For $\partial Q$ we write $L^2(\partial Q)=L^2(\partial Q,\mu_{\partial Q})$, the $L^2$-space with respect to $\mu_{\partial Q}$. 

The spaces $H^\alpha(\R^N)=H^{\alpha,2}(\R^N)$ denote the usual Bessel potential spaces, see for instance \cite{AdHei}, where they are denoted by $L^{\alpha,2}(\R^N)$. Given a domain $O\subset \R^N$, the notation $H^1(O)$ denotes the classical Sobolev space of square integrable functions with finite Dirichlet integral, usually denoted by $W^{1,2}(O)$.

{Since the boundary $\partial Q=\tilde\Omega\cup S$ is a closed set composed by sets of different Hausdorff dimension, in order to consider the trace space of $H^\alpha(Q)$ on $\partial Q$, we introduce suitable spaces $\tilde B_\alpha^{2,2}(\partial Q)$ as in \cite[page 356]{jonsson91}.} For any
\begin{equation}\label{E:smoothness}
\frac{1}{2}<\alpha<2-\frac{D_f}{2}
\end{equation}
let $\tilde B^{2,2}_\alpha(\partial Q)$ denote the class of functions $u$ on $\partial Q$ such that 
\begin{equation}\label{besov chiusi}
\|u\|^2_{\tilde B^{2,2}_\alpha(\partial Q)}=\|u\|^2_{L^2(\partial Q)}+\iint_{|x-y|<1}\frac{|u(x)-u(y)|^2}{|x-y|^{2\alpha-3}(\mu_{\partial Q}(B(x,|x-y|)))^2}\,\de\mu_{\partial Q}(x)\,\de\mu_{\partial Q}(y)
\end{equation}
is finite. 

{We remark that  $\mu_{\partial Q}$ defined 
in \eqref{e-mu}
	is not an Ahlfors regular  $d$-measure on $\partial Q$. That is,     the $\mu_{\partial\Omega}$-measure of a ball of radius $r>0$ can not be estimated from above and below, respectively, by a constant times $r^d$. 
	Therefore the space $\tilde B^{2,2}_\alpha(\partial Q)$ does not coincide with the usual Besov space $B^{2,2}_\alpha(\partial Q)$ defined in \cite[page 103]{JoWa} or \cite{triebel}}.

We denote by $|A|$ the Lebesgue measure of a subset $A\subset\R^N$. For  $f\in H^{\alpha}(O)$, $O\subset \R^N$ open, we put
\begin{equation} \label{e3.33}
\gamma_0f(P)=\lim_{r\to 0}{1\over|B(P,r)\cap O|}\int_{B(P,r)\cap O}f(x)\,\de x
\end{equation}
at every point $P\in \overline{O}$ where the limit exists. This is a typical form of \emph{restriction operator} in the spirit of Lebesgue differentiation.

The following trace theorem is a special case of \cite[Theorem 1]{jonsson91}, see also \cite[Proposition 2]{jonsson91}.

\begin{prop}
Let $\alpha$ be as in \eqref{E:smoothness}. $\tilde B_\alpha^{2,2}(\partial Q)$ is the trace space of $H^{\alpha}(\R^3)$, that is:
\begin{enumerate}
\item[(i)] $f\mapsto \gamma_0f$ is a linear and continuous operator from $H^\alpha(\R^3)$ to $\tilde B_{\alpha}^{2,2}(\partial Q)$;
\item[(ii)] There exists a linear and continuous operator $\Ext\colon\tilde B_{\alpha}^{2,2}(\partial Q)\to H^{\alpha}(\R^3)$ such that $\gamma_0 \circ\Ext$ is the identity operator on $\tilde B_{\alpha}^{2,2}(\partial Q)$. 
\end{enumerate}
\end{prop}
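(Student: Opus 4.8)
The plan is to recognize this as a direct application of Jonsson's general trace/extension theory, so the main work is to verify that the geometric data $(\partial Q, \mu_{\partial Q})$ satisfies the hypotheses of \cite[Theorem 1]{jonsson91} and that the Besov-type space $\tilde B^{2,2}_\alpha(\partial Q)$ in \eqref{besov chiusi} is precisely the space that appears there. First I would recall that Jonsson's framework does \emph{not} require an Ahlfors $d$-regular measure; it works for closed sets $\partial Q$ equipped with a measure $\mu_{\partial Q}$ obeying only a doubling-type scaling condition with possibly different upper and lower exponents. Those are exactly the two inequalities recorded in \eqref{E:scaling}, with upper exponent $D_f+1$ and lower exponent $2$. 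The smoothness window \eqref{E:smoothness}, namely $\tfrac12<\alpha<2-\tfrac{D_f}{2}$, is the range in which Jonsson's theorem guarantees both a bounded restriction and a bounded extension for $H^\alpha(\R^3)$; so my first step is simply to check that \eqref{E:scaling} places us inside the admissible regime of his hypotheses and that the definition \eqref{besov chiusi}, with its weight $|x-y|^{2\alpha-3}(\mu_{\partial Q}(B(x,|x-y|)))^{-2}$, matches Jonsson's intrinsic Besov norm for a $3$-dimensional ambient space and smoothness $\alpha$.

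For part (i) I would invoke the restriction half of Jonsson's theorem. The restriction operator is the Lebesgue-differentiation limit $\gamma_0$ defined in \eqref{e3.33}; Jonsson's theorem asserts that for $f\in H^\alpha(\R^3)$ this limit exists $\mu_{\partial Q}$-almost everywhere and that the resulting trace lies in the intrinsic Besov space with a norm bound $\|\gamma_0 f\|_{\tilde B^{2,2}_\alpha(\partial Q)}\le C\|f\|_{H^\alpha(\R^3)}$. Since $\partial Q$ is compact, only the local part of the norm (the integral over $|x-y|<1$) is relevant, which is why \eqref{besov chiusi} and the scaling in \eqref{E:scaling} are stated with the restriction $kr\le 1$. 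Linearity of $\gamma_0$ is immediate from linearity of the averaging integral, so the content here is the boundedness, which I would cite verbatim from \cite[Theorem 1]{jonsson91}.

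For part (ii) I would invoke the extension half. Jonsson constructs a bounded linear extension operator $\Ext$ from the intrinsic Besov space on $\partial Q$ into $H^\alpha(\R^3)$ via a Whitney decomposition of the complement $\R^3\setminus\partial Q$, assigning on each Whitney cube a weighted polynomial (here, for $\alpha<2-\tfrac{D_f}{2}<2$ the relevant polynomials have degree $0$, i.e. weighted local averages), and patching with a partition of unity. The reproducing identity $\gamma_0\circ\Ext=\id$ on $\tilde B^{2,2}_\alpha(\partial Q)$ then follows because the Whitney extension near a point $P\in\partial Q$ is built from averages whose limit reproduces the boundary datum at $\mu_{\partial Q}$-almost every Lebesgue point; this is part of Jonsson's \cite[Proposition 2]{jonsson91}. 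I would state this as a citation rather than reconstruct the patching estimates.

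The only genuine obstacle is the bookkeeping of exponents: one must confirm that the ambient dimension $N=3$ enters Jonsson's formulas correctly (the exponent $2\alpha-3=2\alpha-N$ in \eqref{besov chiusi}) and that the non-regular measure $\mu_{\partial Q}$ still fits his hypotheses. The authors have already flagged in the remark following \eqref{besov chiusi} that $\mu_{\partial Q}$ fails to be Ahlfors $d$-regular, so $\tilde B^{2,2}_\alpha(\partial Q)$ is genuinely the \emph{intrinsic} (measure-weighted) Besov space rather than the classical $B^{2,2}_\alpha$ of \cite{JoWa}; the crucial point is that Jonsson's 1991 paper is formulated precisely for this generality, requiring only the two-sided scaling \eqref{E:scaling}. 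Once that match is verified, both (i) and (ii) are immediate specializations, and I expect no further analytic difficulty.
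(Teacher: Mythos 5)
Your proposal coincides with the paper's own treatment: the paper offers no independent argument but obtains the proposition exactly as you do, as a special case of \cite[Theorem 1]{jonsson91} (with \cite[Proposition 2]{jonsson91} supplying $\gamma_0\circ\Ext=\mathrm{id}$), the point being that Jonsson's framework requires only the two-sided scaling \eqref{E:scaling} of $\mu_{\partial Q}$ rather than Ahlfors $d$-regularity. Your extra bookkeeping --- matching the window \eqref{E:smoothness} and the weight in \eqref{besov chiusi} to Jonsson's hypotheses for ambient dimension $N=3$ --- is precisely the verification the paper leaves implicit, and it is correct.
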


Combined with trace and extension results between the spaces $H^1(\R^3)$ and $H^1(Q)$, such as for instance \cite[Chapter VII, Theorem 1, combined with Chapter VIII, Proposition 1]{JoWa}, we obtain the following. 

\begin{folg}\label{P:trace}
The space $\tilde B_1^{2,2}(\partial Q)$ is the trace space of $H^1(Q)$ on $\partial Q$, i.e. there exist a continuous linear restriction operator from $H^1(Q)$ to $\tilde B_1^{2,2}(\partial Q)$ and a continuous linear extension from $\tilde B_1^{2,2}(\partial Q)$ to $H^1(Q)$.
\end{folg}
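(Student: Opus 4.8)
The plan is to derive the corollary by composing the trace/extension theorem for $H^\alpha(\R^3)$ from the preceding Proposition, specialized to $\alpha=1$, with a Sobolev extension/restriction pair between $H^1(Q)$ and $H^1(\R^3)$. First I would verify that $\alpha=1$ lies in the admissible range \eqref{E:smoothness}: since $D_f=\tfrac{\ln 4}{\ln 3}<2$, one has $2-\tfrac{D_f}{2}>1>\tfrac12$, so the Proposition applies at $\alpha=1$ and provides a bounded trace operator $\gamma_0\colon H^1(\R^3)\to\tilde B_1^{2,2}(\partial Q)$ together with a bounded extension $\Ext\colon\tilde B_1^{2,2}(\partial Q)\to H^1(\R^3)$ satisfying $\gamma_0\circ\Ext=\mathrm{id}$ on $\tilde B_1^{2,2}(\partial Q)$.

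Next I would invoke the cited Jonsson--Wallin results to obtain a bounded linear extension operator $E\colon H^1(Q)\to H^1(\R^3)$ with $(Ef)|_Q=f$; the restriction map $H^1(\R^3)\to H^1(Q)$, $g\mapsto g|_Q$, is trivially bounded with norm at most one. At this point I would record that the step uses that $Q=\Omega\times I$ is a $W^{1,2}$-extension domain, which holds because the snowflake domain $\Omega$ is an $(\varepsilon,\delta)$-domain in the sense of Jones and taking the product with the interval $I$ preserves this property.

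The restriction operator claimed in the corollary is then defined as the composition $\gamma_0\circ E\colon H^1(Q)\to\tilde B_1^{2,2}(\partial Q)$, and the extension operator as $g\mapsto(\Ext g)|_Q\colon\tilde B_1^{2,2}(\partial Q)\to H^1(Q)$; both are bounded as compositions of bounded maps. To see that $\gamma_0\circ E$ is genuinely a restriction operator on $\partial Q$, and in particular independent of the chosen extension $E$, I would show that for $f\in H^1(Q)$ the averaging limit \eqref{e3.33} computed with $O=Q$ agrees $\mu_{\partial Q}$-almost everywhere with $\gamma_0(Ef)$ computed with $O=\R^3$. Granting this, one computes $(\gamma_0\circ E)\bigl((\Ext g)|_Q\bigr)=\gamma_0\bigl(E((\Ext g)|_Q)\bigr)=\gamma_0(\Ext g)=g$, where the middle equality again uses independence of extension and the last is part (ii) of the Proposition, so the two operators indeed form a trace/extension pair.

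I expect the main obstacle to be exactly this consistency check: verifying that the trace taken from inside $Q$ coincides with the restriction of the whole-space trace, equivalently that the limit \eqref{e3.33} is insensitive to whether one integrates over $Q$ or over all of $\R^3$. This rests on the fact that for $\mu_{\partial Q}$-a.e.\ $P\in\partial Q$ the balls $B(P,r)$ and $B(P,r)\cap Q$ have comparable Lebesgue measure as $r\to 0$, a measure-density (plumpness) property of the fractal domain $Q$ that also underlies its extension theorem. This is the only point that is not a formal manipulation of the quoted results, and it is where the geometry of $\partial Q$ genuinely enters.
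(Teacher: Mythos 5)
Your proposal is correct and takes essentially the same route as the paper, which obtains the corollary precisely by composing Jonsson's trace theorem at $\alpha=1$ (admissible since $\tfrac12<1<2-\tfrac{D_f}{2}$) with the Jonsson--Wallin extension/restriction pair between $H^1(Q)$ and $H^1(\R^3)$. The consistency check you single out --- that the interior averaging limit \eqref{e3.33} over $B(P,r)\cap Q$ agrees with the whole-space trace, via the measure density of $Q$ at $\partial Q$ --- is left implicit in the paper's citation-based argument but is exactly the right point to verify.
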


For the restriction to $\partial Q$ of a function $f\in H^1(Q)$ we write $f|_{\partial Q}$. 

More classical trace and extension results cover the case of Lipschitz boundaries, such as the sets $\partial Q_n:=S_n\cup \tilde\Omega_n$, where $\tilde{\Omega}_n:=(\Omega_n\times \left\lbrace 0\right\rbrace)\cup(\Omega_n\times \left\lbrace 1\right\rbrace)$. For the following result see \cite{grisvard-fr,necas}.


\begin{prop}  
The space $H^{\frac{1}{2}}(\partial Q_n)$ is the trace space of $H^1(Q_n)$ on $\partial Q_n$ in the
following sense:

\begin{enumerate}
\item[(i)] $\gamma_0$ is a continuous and linear operator
from  $H^{1}(Q_n)$ to $H^{\frac{1}{2}}(\partial Q_n)$; \item[(ii)]
there is a continuous linear operator $\Ext$ from
$H^{\frac{1}{2}}(\partial Q_n)$ to $H^{1}(Q_n)$ such that
$\gamma_0\circ\Ext$ is the identity operator in
$H^{\frac{1}{2}}(\partial Q_n)$.
\end{enumerate}
\end{prop}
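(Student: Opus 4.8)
The plan is to reduce the statement to the classical trace theorem for bounded Lipschitz domains. The key observation is that, unlike the fractal boundary $\partial Q$, the pre-fractal boundary $\partial Q_n=S_n\cup\tilde\Omega_n$ is Lipschitz, so no Besov machinery of Jonsson--Wallin type is needed and the standard theory applies verbatim.

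First I would verify that $Q_n$ is a bounded Lipschitz domain in $\R^3$. Since $F_n$ is a closed polygonal curve, it bounds a bounded polygonal domain $\Omega_n\subset\R^2$; at each vertex the interior opening angle stays strictly between $0$ and $2\pi$ (the Koch construction produces only angles such as $\frac{\pi}{3}$ and $\frac{4\pi}{3}$), so $\Omega_n$ has no cusps and is Lipschitz. The cylinder $Q_n=\Omega_n\times(0,1)$ then inherits a Lipschitz boundary: the lateral part $S_n=F_n\times[0,1]$ is Lipschitz as a product of a Lipschitz curve with an interval, the two bases $\Omega_n\times\{0\}$ and $\Omega_n\times\{1\}$ are flat, and the horizontal edges joining them, as well as the vertical edges over the vertices of $F_n$, are Lipschitz. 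Thus $\partial Q_n$ is a compact two-dimensional Lipschitz manifold and its surface measure is $2$-Ahlfors regular.

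With the Lipschitz character established, parts (i) and (ii) are exactly the Gagliardo trace theorem in the form given in \cite{necas, grisvard-fr}: the restriction operator extends to a bounded linear surjection $\gamma_0\colon H^1(Q_n)\to H^{\frac{1}{2}}(\partial Q_n)$ that possesses a bounded linear right inverse $\Ext$ with $\gamma_0\circ\Ext=\id$. Because $\partial Q_n$ is $2$-dimensional and Lipschitz, $H^{\frac{1}{2}}(\partial Q_n)$ coincides with the usual Slobodeckij (fractional Sobolev) space, so the norm on the right-hand side is the standard one.

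The only point deserving care is the identification of the abstract restriction operator $\gamma_0$ defined in \eqref{e3.33} through Lebesgue averaging with the classical trace used in \cite{necas, grisvard-fr}. This is routine: for $u\in C(\overline{Q_n})\cap H^1(Q_n)$ both operators coincide with the pointwise restriction $u|_{\partial Q_n}$, and at surface-measure-almost every boundary point of a Lipschitz domain the precise representative of an $H^1$ function is a Lebesgue point, so the two definitions agree by density of $C(\overline{Q_n})\cap H^1(Q_n)$ in $H^1(Q_n)$. Finally, since $S_n$ and $\tilde\Omega_n$ meet only along the lower-dimensional joining edges, which carry zero surface measure, gluing the characterizations over the two smooth pieces causes no difficulty; this matching is the main (though minor) obstacle in turning the cited Lipschitz-domain results into the precise statement (i)--(ii).
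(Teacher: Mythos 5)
Your proposal is correct and follows the same route as the paper, which proves this proposition simply by citing the classical trace theorems of \cite{grisvard-fr,necas} for Lipschitz (polyhedral) domains; your verification that $Q_n$ is Lipschitz and that the averaged restriction operator $\gamma_0$ of \eqref{e3.33} agrees with the classical trace just makes explicit the routine checks the paper leaves to those references.
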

 
As usual, we write $H^{-\frac{1}{2}}(\partial Q_n)$ to denote the dual space of $H^{\frac{1}{2}}(\partial Q_n)$, see \cite[p. 8]{GirRav86}.

We pass to vector valued functions. Consider the space
\[H(\curl,Q):=\left\lbrace {\bf u}=(u_1, u_2, u_3)\colon Q\to \mathbb{R}^3: u_1, u_2, u_3\in L^2(Q) \,\,\text{and}\,\curl{\bf u}\in L^2(Q)^3  \right\rbrace.\]
Endowed with the norm $\|{\bf u}\|_{\curl,Q}=\left(\left\|{\bf u}\right\|_{L^2(Q)^3}^2+\left\|\curl {\bf u}\right\|^2_{L^2(Q)^3}\right)^{1/2}$, it becomes a Hilbert space, see for instance \cite{duv-lions, GirRav86} or \cite{temam}.

We now prove a generalized vector Stokes formula. Suppose ${\bf{u}}\in H(\curl,Q)$. For any ${\bf v}\in \tilde B^{2,2}_{1}(\partial Q)^3$ let ${\bf w}\in H^1(Q)^3$ be such that ${\bf w}|_{\partial Q}={\bf v}$, defined component-wise in the sense of Corollary \ref{P:trace}, and consider the quantity
\[\gamma_\tau{\bf u}({\bf{v}}):=\int_Q{\bf{u}} \cdot \curl\,{\bf{w}}\,\de x -\int_Q {\bf{w}}\cdot \curl\, {\bf{u}} \,\de x.\]


\begin{theorem}\label{stokes} Let $Q$ be the Koch-type pipe.
\begin{enumerate}
\item[(i)] The map ${\bf u} \mapsto \gamma_\tau\bf{u}$ is well defined as a bounded linear operator from $H(\curl,Q)$ into $(( \tilde B^{2,2}_{1}(\partial Q))')^3$. By setting ${\bf{u}}\times{\bf{n}}|_{\partial Q}:= \gamma_\tau\bf{u}$
we have
\begin{equation}\label{stima}
|\left\langle {\bf{u}\times\bf{n}}|_{\partial Q},{\bf{v}}\right\rangle_{(( \tilde B^{2,2}_1(\partial Q))')^3,  \tilde B^{2,2}_{1}(\partial Q)^3}|\leq c\left\|{\bf{u}}\right\|_{\curl, Q}\left\|{\bf{v}}\right\|_{\tilde B^{2,2}_{1}( \partial Q)^3}
\end{equation}
for all ${\bf u}\in H(\curl,Q)$ and ${\bf v}\in  \tilde B^{2,2}_{1}(\partial Q)^3$. 
\item[(ii)] Moreover, we have 
\begin{equation}\label{E:approx}
\left\langle {\bf{u}}\times{\bf{n}}|_{\partial Q},{\bf w}|_{\partial Q}\right\rangle_{(( \tilde B^{2,2}_1(\partial Q))')^3, \tilde B^{2,2}_{1}(\partial Q)^3}=\lim_{n\to\infty}\left\langle {\bf u}\times {\bf n}|_{\partial Q_n}, {\bf{w}}|_{\partial Q_n}\right\rangle_{H^{-\frac{1}{2}}(\partial Q_n)^3, H^\frac{1}{2}(\partial Q_n)^3}
\end{equation}
and 
\begin{equation}\label{E:Stokes2}
\left\langle {\bf{u}}\times{\bf{n}}|_{\partial Q},{\bf w}|_{\partial Q}\right\rangle_{(( \tilde B^{2,2}_1(\partial Q))')^3, \tilde B^{2,2}_{1}(\partial Q)^3}=\int_{Q} {\bf u}\cdot\curl {\bf{w}}\,\de x-\int_{Q} {\bf{w}}\cdot\curl {\bf u}\,\de x
\end{equation}
for all ${\bf u}\in H(\curl,Q)$ and ${\bf{w}}\in H^1(Q)^3$.
\end{enumerate}
\end{theorem}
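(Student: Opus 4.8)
The plan is to establish the generalized tangential trace operator $\gamma_\tau$ by a density-plus-continuity argument, exploiting the fact that the classical Green/Stokes formula holds on the smooth (polyhedral) pre-fractal domains $Q_n$ and then passing to the limit. The key structural observation is that for smooth vector fields the quantity $\int_Q {\bf u}\cdot\curl{\bf w}\,\de x - \int_Q {\bf w}\cdot\curl{\bf u}\,\de x$ depends only on the boundary trace ${\bf w}|_{\partial Q}$, which is the content we must promote to the non-smooth setting. I would organize the argument in three stages corresponding to (i) boundedness/well-definedness, (ii) the approximation identity \eqref{E:approx}, and (iii) the Stokes identity \eqref{E:Stokes2}.

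For part (i), I would first check that $\gamma_\tau{\bf u}({\bf v})$ does not depend on the choice of lifting ${\bf w}\in H^1(Q)^3$ with ${\bf w}|_{\partial Q}={\bf v}$. If ${\bf w}_1,{\bf w}_2$ are two such liftings, their difference lies in the closure of $C_c^\infty(Q)^3$ in $H^1(Q)^3$, and for test fields ${\bf\varphi}\in C_c^\infty(Q)^3$ the integration-by-parts identity $\int_Q {\bf u}\cdot\curl{\bf\varphi}\,\de x = \int_Q {\bf\varphi}\cdot\curl{\bf u}\,\de x$ holds by the very definition of the distributional curl; hence the difference vanishes and $\gamma_\tau{\bf u}$ is well defined on traces. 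The bound \eqref{stima} then follows from Cauchy--Schwarz: choosing a continuous extension ${\bf w}=\ext\,{\bf v}$ via Corollary \ref{P:trace} gives
\begin{equation*}
|\gamma_\tau{\bf u}({\bf v})|\leq \|{\bf u}\|_{L^2(Q)^3}\|\curl{\bf w}\|_{L^2(Q)^3}+\|{\bf w}\|_{L^2(Q)^3}\|\curl{\bf u}\|_{L^2(Q)^3}\leq c\,\|{\bf u}\|_{\curl,Q}\|{\bf w}\|_{H^1(Q)^3},
\end{equation*}
and bounding $\|{\bf w}\|_{H^1(Q)^3}\leq C\|{\bf v}\|_{\tilde B^{2,2}_1(\partial Q)^3}$ by continuity of $\ext$. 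This simultaneously establishes continuity and linearity of the operator into $((\tilde B^{2,2}_1(\partial Q))')^3$.

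For part (ii), the identity \eqref{E:Stokes2} is essentially a restatement of the definition of $\gamma_\tau{\bf u}$ applied to ${\bf v}={\bf w}|_{\partial Q}$, together with the independence-of-lifting just proved — so the genuine work is \eqref{E:approx}. Here I would invoke the classical Stokes formula on each Lipschitz domain $Q_n$, which gives
\begin{equation*}
\langle {\bf u}\times{\bf n}|_{\partial Q_n},{\bf w}|_{\partial Q_n}\rangle_{H^{-1/2}(\partial Q_n)^3,H^{1/2}(\partial Q_n)^3}=\int_{Q_n}{\bf u}\cdot\curl{\bf w}\,\de x-\int_{Q_n}{\bf w}\cdot\curl{\bf u}\,\de x,
\end{equation*}
valid for ${\bf u}\in H(\curl,Q_n)$ and ${\bf w}\in H^1(Q_n)^3$. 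Since $Q_n\subset Q$ (or $Q\subset Q_n$ depending on orientation of the approximation) with $|Q\triangle Q_n|\to 0$ as $n\to\infty$, and the integrands $\,{\bf u}\cdot\curl{\bf w}$ and ${\bf w}\cdot\curl{\bf u}$ are fixed $L^1(Q)$ functions (restrictions of the global $H(\curl)$-field and $H^1$-field), the volume integrals over $Q_n$ converge to the corresponding integrals over $Q$ by dominated convergence on the symmetric difference. Combining this with \eqref{E:Stokes2} yields \eqref{E:approx}.

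\emph{The hard part will be} the measure-theoretic convergence underlying \eqref{E:approx}: one must ensure that the restrictions of a single fixed pair $({\bf u},{\bf w})$ — defined on $Q$ — make sense and have the required regularity on each $Q_n$, which requires either that the $Q_n$ exhaust $Q$ from inside (so restriction is immediate) or a controlled extension of ${\bf u},{\bf w}$ to a neighborhood of $\overline Q$ so that restriction to $Q_n$ lands in $H(\curl,Q_n)$ and $H^1(Q_n)^3$ respectively, with uniformly bounded norms. The subtlety is that the duality pairings on the left-hand side live in different spaces for each $n$ (the pre-fractal $H^{\pm 1/2}(\partial Q_n)$ versus the fractal $\tilde B^{2,2}_1(\partial Q)$), so the convergence cannot be read off at the level of the boundary pairings directly — it must be routed entirely through the volume integrals, where all objects live in the common ambient space $L^2(Q)$. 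Once the geometric fact $|Q\triangle Q_n|\to 0$ and the uniform $\curl$/$H^1$ control are in place, the limit passage is a clean dominated-convergence argument; verifying these two ingredients rigorously for the Koch-type geometry is where the real care is needed.
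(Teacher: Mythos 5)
Your proposal follows essentially the same route as the paper's proof: part (i) via Cauchy--Schwarz together with the trace/extension result of Corollary \ref{P:trace}, and part (ii) by applying the classical Stokes formula on the invading Lipschitz domains $Q_n\subset Q_{n+1}$ with $Q=\bigcup_{n}Q_n$ (so restriction is indeed immediate, resolving the alternative you flagged) and passing to the limit in the volume integrals by dominated convergence, with \eqref{E:Stokes2} read off from the definition of $\gamma_\tau$. The only remark worth adding is that your independence-of-lifting step tacitly identifies zero-trace $H^1$ fields with $H^1_0(Q)^3$, which for a fractal boundary is nontrivial and rests on Wallin's theorem, a fact the paper invokes only in a later remark while treating the independence claim even more briefly than you do.
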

\medskip

Formula \eqref{E:approx} provides a suitable  approximation of ${\bf{u}}\times{\bf{n}}|_{\partial Q}$ in terms of the tangential traces ${\bf u}\times {\bf n}|_{\partial Q_n}$ along the Lipschitz boundaries $\partial Q_n$, see \cite[\S 2, Theorem 2.11]{GirRav86} or \cite{temam}. In this sense ${\bf u}\times {\bf n}|_{\partial Q}$ can be seen as a \emph{generalized tangential trace} and \eqref{E:Stokes2} is a \emph{generalized Stokes formula}. 



\begin{proof}
Let ${\bf u}\in H(\curl, Q)$. Given ${\bf v}\in \tilde B^{2,2}_{1}(\partial Q)^3$, let ${\bf{w}}\in H^1(Q)^3$ be such that ${\bf w}|_{\partial Q}={\bf v}$ in $\tilde B^{2,2}_{1}(\partial Q)^3$. Then Cauchy-Schwarz together with the inclusion $H^1(Q)^3\subset H(\curl,Q)$ and {Corollary} \ref{P:trace} lead to the estimate
\begin{align}
|\left\langle {\bf{u}}\times{\bf{n}}|_{\partial Q},{\bf w}|_{\partial Q}\right\rangle|&\leq \|{\bf u}\|_{L^2(Q)^3}\|\curl {\bf{w}}\|_{L^2(Q)^3}+\|{\bf{w}}\|_{L^2(Q)^3}\|{\bf \curl u}\|_{L^2(Q)^3}\notag\\
&\leq c\,\|{\bf w}\|_{H^1(Q)^3}\|{\bf u}\|_{\curl,Q}\notag\\
&\leq c\,\|{\bf v}\|_{ \tilde B^{2,2}_1 (\partial Q)^3}\|{\bf u}\|_{\curl,Q}.\notag
\end{align}
This shows in particular, that $\gamma_\tau{\bf u}({\bf{v}})$ is independent from the choice of the extension ${\bf w}$ of ${\bf v}$, and that $\bf{u}\times\bf{n}$ is an element of $(( \tilde B^{2,2}_{1}(\partial Q))')^3$ which satisfies \eqref{stima}.

We now consider the sequence of domains $Q_n=\Omega_n\times I$, which are bounded Lipschitz domains and satisfy $Q_n\subset Q_{n+1}$ and $Q=\bigcup_{n=1}^\infty Q_n$.
By the vector Stokes formula for Lipschitz domains, cf.  \cite[\S 2, Theorem 2.11]{GirRav86} or Appendix I in \cite{temam}, together with the dominated convergence theorem, we have 
\begin{align}\label{E:Lipschitz}
\lim_{n\to\infty} \left\langle {\bf u}\times {\bf n}|_{\partial Q_n}, {\bf{w}}|_{\partial Q_n}\right\rangle_{H^{-\frac{1}{2}}(\partial Q_n)^3, H^\frac{1}{2}(\partial Q_n)^3}&=\lim_{n\to\infty}\int_{Q_n} {\bf u}\cdot\curl {\bf{w}}\,\de x-\int_{Q_n} {\bf{w}}\cdot\curl {\bf u}\,\de x\notag\\
&=\int_{Q} {\bf u}\cdot\curl {\bf{w}}\,\de x-\int_{Q} {\bf{w}}\cdot\curl {\bf u}\,\de x\notag\\
&=\left\langle {\bf{u}\times\bf{n}},{\bf w}|_{\partial Q}\right\rangle_{(( \tilde B^{2,2}_1(\partial Q))')^3, \tilde B^{2,2}_{1}(\partial Q)^3}\notag
\end{align}
for all ${\bf{w}}\in H^1(Q)^3$ and $n$, where ${\bf u}\times {\bf n}|_{\partial Q_n}$ is defined as an element of $H^{-\frac{1}{2}}(\partial Q_n)^3$.
\end{proof}

Next, consider the space 
\[H(\dive, Q):=\left\lbrace {\bf u}=(u_1,u_2,u_3) \colon Q\to \mathbb{R}^3: u_1, u_2, u_3\in L^2(Q) \,\,\text{and}\,\dive{\bf u}\in L^2(Q)\right\rbrace,\]  
which is Hilbert when equipped with the norm $\|{\bf u}\|_{\dive,Q}=\left(\left\|{\bf u}\right\|_{L^2(Q)^3}^2+\left\|\dive {\bf u}\right\|^2_{L^2(Q)}\right)^{1/2}$. Following the same pattern as above one can establish a generalized Gauss-Green formula. This can be done as in \cite{LaVe2}.

Suppose ${\bf u}\in H(\dive, Q)$. For any $v\in \tilde B^{2,2}_1(\partial Q)$ let $w\in H^1(Q)$ be such that $w|_{\partial Q}=v$ in the sense of {Corollary} \ref{P:trace} and consider
\[\gamma_\nu{\bf u}(v):=\int_Q {\bf u}\cdot \nabla w\:\de x + \int_Q(\dive {\bf u}) w\:\de x.\]
By proceeding as in \cite[{Theorem 3.7}]{LaVe2} we can prove the following Green formula.
\begin{theorem}\label{gauss-green} Let $Q$ be the Koch-type pipe.
\begin{enumerate}
\item[(i)] The map ${\bf u} \mapsto \gamma_\nu\bf{u}$ is well defined as a bounded linear operator from $H(\dive,Q)$ into $(( \tilde B^{2,2}_{1}(\partial Q))')$. By setting ${\bf{u}}\cdot{\bf{n}}|_{\partial Q}:=\gamma_\nu{\bf u}$, we have 
\begin{equation}
|\left\langle {\bf{u}}\cdot{\bf{n}}|_{\partial Q},v\right\rangle_{(( \tilde B^{2,2}_1(\partial Q))'), \tilde B^{2,2}_{1}(\partial Q)}|\leq c\left\|{\bf{u}}\right\|_{\dive, Q}\left\|v\right\|_{ \tilde B^{2,2}_{1}( \partial Q)}. \notag
\end{equation}
for all ${\bf u}\in H(\dive,Q)$ and $v\in \tilde B^{2,2}_{1}(\partial Q)$. 
\item[(ii)] Moreover, we have 
\begin{equation}\label{E:approx2}
\left\langle {\bf{u}}\cdot{\bf{n}}|_{\partial Q},w|_{\partial Q}\right\rangle_{(( \tilde B^{2,2}_1(\partial Q))'), \tilde B^{2,2}_{1}(\partial Q)}=\lim_{n\to\infty} \left\langle {\bf u}\cdot {\bf n}|_{\partial Q_n}, {w}|_{\partial Q_n}\right\rangle_{H^{-\frac{1}{2}}(\partial Q_n), H^\frac{1}{2}(\partial Q_n)}
\end{equation}
and 
\begin{equation}\label{E:GaussGreen}
\left\langle {\bf{u}}\cdot{\bf{n}}|_{\partial Q},w|_{\partial Q}\right\rangle_{(\tilde B^{2,2}_1(\partial Q))', \tilde B^{2,2}_{1}(\partial Q)}=\int_{Q}  {\bf u}\cdot\nabla w\,\de x-\int_{Q} (\dive {\bf u})w\,\de x
\end{equation}
for all ${\bf u}\in H(\dive,Q)$ and $w\in H^1(Q)$.
\end{enumerate}
\end{theorem}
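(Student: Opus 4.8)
The strategy is to mirror exactly the proof of the vector Stokes formula (Theorem~\ref{stokes}), replacing $\curl$ by $\dive$, the cross product by the dot product, and the defining bilinear form $\gamma_\tau$ by $\gamma_\nu$. The three assertions to establish are: well-definedness and independence of the extension, the bound in part~(i), and the two identities \eqref{E:approx2} and \eqref{E:GaussGreen} in part~(ii). First I would fix ${\bf u}\in H(\dive,Q)$ and $v\in \tilde B^{2,2}_1(\partial Q)$, choose any $w\in H^1(Q)$ with $w|_{\partial Q}=v$ (which exists by Corollary~\ref{P:trace}), and apply Cauchy--Schwarz to the two integrals defining $\gamma_\nu{\bf u}(v)$ together with the continuity of the extension in Corollary~\ref{P:trace}, obtaining
\begin{align}
|\gamma_\nu{\bf u}(v)|&\leq \|{\bf u}\|_{L^2(Q)^3}\|\nabla w\|_{L^2(Q)^3}+\|\dive{\bf u}\|_{L^2(Q)}\|w\|_{L^2(Q)}\notag\\
&\leq c\,\|{\bf u}\|_{\dive,Q}\,\|w\|_{H^1(Q)}\leq c\,\|{\bf u}\|_{\dive,Q}\,\|v\|_{\tilde B^{2,2}_1(\partial Q)}.\notag
\end{align}
This single estimate simultaneously gives the bound in part~(i) and, since the right-hand side depends only on $v$ (not on the particular $w$), shows $\gamma_\nu{\bf u}(v)$ is independent of the extension, so $\gamma_\nu{\bf u}$ is a well-defined bounded linear functional, i.e. an element of $(\tilde B^{2,2}_1(\partial Q))'$.

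\textbf{The approximation identity.} For part~(ii) I would use the exhausting sequence of bounded Lipschitz domains $Q_n=\Omega_n\times I$ satisfying $Q_n\subset Q_{n+1}$ and $Q=\bigcup_n Q_n$, exactly as in the proof of Theorem~\ref{stokes}. On each $Q_n$ the classical Gauss--Green formula for $H(\dive,Q_n)$ on a Lipschitz boundary (see \cite[\S2]{GirRav86} or \cite{temam}) gives, for $w\in H^1(Q)^{}$,
\begin{equation}
\left\langle {\bf u}\cdot{\bf n}|_{\partial Q_n},w|_{\partial Q_n}\right\rangle_{H^{-\frac12}(\partial Q_n),H^{\frac12}(\partial Q_n)}=\int_{Q_n}{\bf u}\cdot\nabla w\,\de x-\int_{Q_n}(\dive{\bf u})w\,\de x.\notag
\end{equation}
Because ${\bf u}\cdot\nabla w$ and $(\dive{\bf u})w$ are both in $L^1(Q)$ and the $Q_n$ increase to $Q$, dominated convergence lets me pass to the limit on the right-hand side, yielding $\int_Q{\bf u}\cdot\nabla w-\int_Q(\dive{\bf u})w$. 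Choosing $w$ to be an $H^1(Q)$-extension of $v=w|_{\partial Q}$, the left-hand side of \eqref{E:GaussGreen} equals this same limit by the very definition of $\gamma_\nu{\bf u}$, which simultaneously establishes \eqref{E:approx2} (identifying the limit of the Lipschitz pairings with the fractal pairing) and \eqref{E:GaussGreen} (identifying it with the integral expression). The consistent sign conventions in both $\gamma_\nu$ and the Lipschitz formula deliver precisely the minus sign in \eqref{E:GaussGreen}.

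\textbf{The main obstacle.} The only genuinely delicate point is the validity of the dominated convergence step, which requires that the restrictions of the fixed global fields ${\bf u}\in H(\dive,Q)$ and $w\in H^1(Q)$ to each $Q_n$ lie in $H(\dive,Q_n)$ and $H^1(Q_n)$ so that the Lipschitz Gauss--Green formula applies there; this is immediate since $Q_n\subset Q$. The dominating integrable majorant is $|{\bf u}\cdot\nabla w|+|(\dive{\bf u})w|\in L^1(Q)$, independent of $n$, so the convergence is routine. The subtler conceptual issue, already resolved in Theorem~\ref{stokes} and in \cite[Theorem 3.7]{LaVe2}, is that the limit of the boundary pairings over the Lipschitz approximations exists and coincides with the intrinsically defined fractal pairing; I would invoke that the limit is forced to exist because the right-hand integrals converge, and the functional $\gamma_\nu{\bf u}$ is continuous in $v$ by part~(i), so no separate compactness argument on the boundary measures is needed.
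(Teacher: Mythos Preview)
Your approach is correct and is precisely the one the paper intends: it explicitly says the result is obtained ``following the same pattern as above'' and ``by proceeding as in \cite[Theorem 3.7]{LaVe2}'', i.e.\ by replaying the proof of Theorem~\ref{stokes} with $\curl$ replaced by $\dive$ and the tangential trace by the normal trace, using Cauchy--Schwarz plus Corollary~\ref{P:trace} for part~(i) and the classical Lipschitz formula on the invading $Q_n$ together with dominated convergence for part~(ii).

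One small correction: the classical Gauss--Green identity on a Lipschitz domain reads $\langle {\bf u}\cdot{\bf n},w\rangle=\int {\bf u}\cdot\nabla w+\int(\dive{\bf u})w$, with a \emph{plus} sign, consistent with the paper's own definition $\gamma_\nu{\bf u}(v)=\int_Q{\bf u}\cdot\nabla w+\int_Q(\dive{\bf u})w$. The minus sign appearing in \eqref{E:GaussGreen} is a typographical slip in the paper; your claim that ``the consistent sign conventions \ldots\ deliver precisely the minus sign'' is therefore not right, and you should write the Lipschitz identity with $+$ rather than $-$.
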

\medskip

Similarly as before formula \eqref{E:approx2} provides a suitable approximation of ${\bf{u}}\cdot{\bf{n}}|_{\partial Q}$ by normal traces ${\bf{u}}\cdot{\bf{n}}|_{\partial Q_n}$ on the Lipschitz boundaries $\partial Q_n$, which follows again from corresponding results in the Lipschitz case, \cite[\S 2, Theorem 2.5]{GirRav86}.

\begin{remark} {We point out that the results of this Section hold not only for the Koch-type pipe. Indeed, these results can be extended to every domain $Q$ having as boundary $\partial Q$ a $d$-set or an arbitrary closed set of $\R^3$, under the assumption that $Q$ can be approximated by an invading sequence of Lipschitz domains $\{Q_n\}$, as in this case.}
\end{remark}

\section{Friedrichs inequality and weak solutions}
\setcounter{equation}{0}


We discuss \eqref{sistema 3d} in terms of weak solutions and the Lax-Milgram Theorem, and to do so we introduce the symmetric bilinear form
\begin{equation}
a({\bf u},{\bf w}) =\int_Q \curl({\bf w})\cdot\left(\frac{1}{\mu}\,\curl({\bf u})\right)\,\de x, \quad {\bf u}, {\bf w}\in H(\curl, Q),\notag
\end{equation}
where, in agreement with the above assumptions, $\mu$ is a real valued measurable function on $Q$ satisfying $\mu_0\leq\mu\leq\mu_1$ a.e. in $Q$ with two constants $\mu_0,\mu_1>0$. Given ${\bf J}\in L^2(Q)^3$ we consider the linear and continuous functional on $H(\curl, Q)$, defined by 
\[f({\bf w})=\int_Q {\bf J} \cdot {\bf w}\,\de x, \quad {\bf w}\in H(\curl, Q).\]

The interpretation as an identity in 
$(( \tilde B^{2,2}_{1}(\partial Q))')^3$ gives a rigorous meaning to the boundary condition ${\bf u}\times {\bf n}=0$ in \eqref{sistema 3d}. To encode it in a suitable function space, we consider the space $H_0(\curl, Q)$, defined as the closure in $H(\curl, Q)$ of all compactly supported smooth vector fields $C_c^\infty(Q)^3$.

\begin{remark}\label{R:curltrace}
Taking into account the boundary condition in \eqref{sistema 3d}, the natural space would be $\Ker \gamma_\tau:=\{{\bf w}\in H(\curl, Q)\,:\,{\bf n} \times {\bf w}=0\,\,\text{on}\;\partial Q \}$. The inclusion $H_0(\curl, Q)\subset \Ker \gamma_\tau$ follows from (\ref{E:Stokes2}). The reverse inclusion is not straightforward, and to keep the present note 
simple we leave its investigation to a later forthcoming paper.
\end{remark}

If we agree to say that a weak solution in $H_0(\curl, Q)$ of the equation 
\begin{equation}\label{E:singleeq}
\curl\left(\frac{1}{\mu}\curl({\bf u})\right)={\bf J}
\end{equation}
is a vector field ${\bf u}\in H_0(\curl, Q)$ such that $a({\bf u},{\bf v})= f({\bf v})$ for all ${\bf v}\in  H_0(\curl, Q)$, then test vector fields ${\bf v}$ can in particular be recruited from
\[\Ker(\curl, Q):=\left\lbrace {\bf w}\in H_0(\curl, Q): \curl {\bf w}=0\right\rbrace, \]  
so that a weak solution of $(P)$ can only exist if $\mathbf{J}$ satisfies the \emph{compatibility condition}
\begin{equation}\label{E:compatibility}
f({\bf v})=\int_Q \mathbf{J}\cdot {\bf v}\,\de x=0 \quad\forall\, {\bf v} \in\Ker(\curl, Q).
\end{equation}
Moreover, since we are also interested in the uniqueness of weak solutions, we restrict ourselves to the quotient space $H_0(\curl, Q)/\Ker(\curl, Q)$, which by a simple quadratic variational problem, \cite[Corollary 1.2]{GirRav86}, involving the quotient space norm, see  \cite[p. 94-95]{hiptmair} or \cite[Lemma 3.5]{Lukas03}, is seen to be isometrically isomorphic to the space
\begin{equation}\label{E:H0bot}
H_{0, \bot}(\curl, Q):=\left\{ {\bf u} \in H_0(\curl, Q)\, :\, \int_Q {\bf u}\cdot {\bf w}\,\de x=0 \  \text{ for all ${\bf w}\in \Ker(\curl,Q)$}\right\}.
\end{equation}

A second requirement to be incorporated in the function spaces is that a solution ${\bf u}$ of $(P)$ should be divergence free. We consider the space $H_0(\dive, Q)$, defined as the completion in $H(\dive, Q)$ of $C_c^\infty(Q)^3$, and its subspace  
\[\Ker(\dive, Q):=\{{\bf u}\in H_0(\dive,Q): \dive {\bf u}=0\}.\] 
This discussion suggests that one possible way to phrase $(P)$ rigorously could be to look for a weak solution to equation (\ref{E:singleeq}) in the space $H_{0, \bot}(\curl, Q)\cap \Ker(\dive, Q)$. The latter space admits a much simpler description. A proof of the following fact can be found at the end of this section.

\begin{proposition}\label{P:coincidence}
A vector field ${\bf u}\in H_0(\curl,Q)\cap H_0(\dive, Q)$ is an element of $H_{0,\bot}(\curl, Q)$ if and only if $\dive {\bf u}=0$.
\end{proposition}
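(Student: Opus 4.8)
The plan is to reduce both implications to the Poincaré lemma on $\R^3$ by extending the fields by zero across $\partial Q$; this sidesteps the fractal (non-Lipschitz) regularity of $\partial Q$ entirely, since the spaces $H_0(\curl,Q)$ and $H_0(\dive,Q)$ are defined intrinsically as closures of $C_c^\infty(Q)^3$. The first step is to record the two zero-extension facts. If ${\bf w}\in H_0(\curl,Q)$ and $({\bf w}_k)\subset C_c^\infty(Q)^3$ approximates it in $H(\curl,Q)$, then the zero-extensions $\tilde{\bf w}_k\in C_c^\infty(\R^3)^3$ are Cauchy in $H(\curl,\R^3)$, so the zero-extension $\tilde{\bf w}$ lies in $H(\curl,\R^3)$ and $\curl\tilde{\bf w}$ equals the zero-extension of $\curl{\bf w}$; in particular $\curl{\bf w}=0$ in $Q$ forces $\curl\tilde{\bf w}=0$ on all of $\R^3$. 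The same argument applied to $H_0(\dive,Q)$ shows that the zero-extension $\tilde{\bf u}$ of ${\bf u}$ satisfies $\tilde{\bf u}\in H(\dive,\R^3)$ with $\dive\tilde{\bf u}$ the zero-extension of $\dive{\bf u}$, so $\dive{\bf u}=0$ gives $\dive\tilde{\bf u}=0$ on $\R^3$, with no singular part on $\partial Q$.

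For the implication ${\bf u}\in H_{0,\bot}(\curl,Q)\Rightarrow\dive{\bf u}=0$ I would simply test against gradients. For every $p\in C_c^\infty(Q)$ the field $\nabla p$ lies in $C_c^\infty(Q)^3\subset H_0(\curl,Q)$ and satisfies $\curl\nabla p=0$, hence $\nabla p\in\Ker(\curl,Q)$. The orthogonality defining $H_{0,\bot}(\curl,Q)$ then yields $\int_Q{\bf u}\cdot\nabla p\,\de x=0$; since this equals $-\int_Q(\dive{\bf u})\,p\,\de x$ by the definition of the distributional divergence together with $\dive{\bf u}\in L^2(Q)$, we conclude $\dive{\bf u}=0$ in $L^2(Q)$.

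For the converse, $\dive{\bf u}=0\Rightarrow{\bf u}\in H_{0,\bot}(\curl,Q)$, fix an arbitrary ${\bf w}\in\Ker(\curl,Q)$ and pass to $\R^3$ via $\int_Q{\bf u}\cdot{\bf w}\,\de x=\int_{\R^3}\tilde{\bf u}\cdot\tilde{\bf w}\,\de x$. By the first step $\tilde{\bf w}\in L^2(\R^3)^3$ is compactly supported and curl-free, so the Poincaré lemma gives $\tilde{\bf w}=\nabla\phi$ with $\phi\in H^1_{\loc}(\R^3)$. Choosing a cutoff $\chi\in C_c^\infty(\R^3)$ with $\chi\equiv1$ on a neighborhood of $\supp\tilde{\bf u}$, we have $\tilde{\bf u}\cdot\nabla\phi=\tilde{\bf u}\cdot\nabla(\chi\phi)$ pointwise, and an integration by parts on $\R^3$ (legitimate since $\chi\phi\in H^1(\R^3)$ and $\dive\tilde{\bf u}\in L^2(\R^3)$) gives $\int_{\R^3}\tilde{\bf u}\cdot\nabla\phi\,\de x=-\int_{\R^3}(\dive\tilde{\bf u})\,\chi\phi\,\de x=0$, because $\dive\tilde{\bf u}=0$ everywhere. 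Thus $\int_Q{\bf u}\cdot{\bf w}\,\de x=0$ for all ${\bf w}\in\Ker(\curl,Q)$, i.e. ${\bf u}\in H_{0,\bot}(\curl,Q)$.

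The step I expect to be the main obstacle is the rigorous justification that the zero-extensions create no singular curl or divergence concentrated on $\partial Q$. This is precisely what the intrinsic $C_c^\infty(Q)$-closure definitions of $H_0(\curl,Q)$ and $H_0(\dive,Q)$ provide, and it is the reason the whole argument survives the passage to a fractal boundary: beyond these extension facts only the global Poincaré lemma and integration by parts on $\R^3$ (which has no boundary) are used, so no trace theory for $\partial Q$ enters. An alternative would be to invoke the Gauss-Green formula of Theorem \ref{gauss-green} together with a representation $\Ker(\curl,Q)=\nabla H^1_0(Q)$, but establishing such a potential representation with vanishing trace on the fractal boundary is less transparent than the zero-extension route.
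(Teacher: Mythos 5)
Your proof is correct, and for the converse implication it takes a genuinely different route from the paper's. The paper derives both directions from the machinery it has built for the fractal boundary: the generalized Gauss--Green formula \eqref{E:GaussGreen} of Theorem \ref{gauss-green} (which rests on the Jonsson-type trace theory and the approximating Lipschitz cylinders $Q_n$) together with Corollary \ref{C:kercurl}, the representation of $\Ker(\curl,Q)$ as gradients of $H^1(Q)$-functions, whose proof (Theorem \ref{T:C}) requires mollification and an exhaustion of $Q$ by simply connected Lipschitz subdomains; in the paper's converse direction the boundary pairing is killed by $H_0(\dive,Q)=\Ker\gamma_\nu$ (Theorem \ref{T:A}). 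You bypass all of this. Your forward direction, testing the orthogonality against $\nabla p$ with $p\in C_c^\infty(Q)$, is in substance what the paper does after it specializes its $q\in H^1(Q)$ to $C_c^\infty(Q)$, only stripped of the Gauss--Green detour. Your converse instead extends ${\bf w}\in\Ker(\curl,Q)$ and ${\bf u}$ by zero to $\R^3$ --- legitimate precisely because $H_0(\curl,Q)$ and $H_0(\dive,Q)$ are $C_c^\infty(Q)^3$-closures, so no distributional curl or divergence concentrates on $\partial Q$ --- and then uses only the global Poincar\'e lemma on $\R^3$ and boundaryless integration by parts; your cutoff argument and the density justification of $\int_{\R^3}\tilde{\bf u}\cdot\nabla(\chi\phi)\,\de x=-\int_{\R^3}(\dive\tilde{\bf u})\chi\phi\,\de x$ are both sound. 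What each approach buys: yours is more elementary and more general, valid for an arbitrary bounded open set with no trace theory, no boundary regularity, and no topological hypothesis on $Q$ (simple connectedness of $\R^3$, not of $Q$, is what matters), and it is in fact the same zero-extension device the paper itself employs to prove Theorem \ref{friedrichs}; the paper's route, by contrast, exercises its fractal trace apparatus and yields along the way the stronger Theorem \ref{T:C}, a gradient representation for \emph{all} curl-free fields in $L^2(Q)^3$, which your trick cannot reach, since a field that is curl-free in $Q$ but not in $H_0(\curl,Q)$ generally acquires singular curl on $\partial Q$ upon zero extension. Incidentally, your argument recovers one inclusion of Corollary \ref{C:kercurl} with a potential that is $H^1$ across the boundary, since $\phi\in H^1_{\loc}(\R^3)$ restricts to an element of $H^1(Q)$.
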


\medskip

As a next step of simplification, the intersection of the spaces $H_0(\curl, Q)$ and $H_0(\dive, Q)$ can be determined in a standard way, see \cite[Theorem 2.5]{amrouche} or \cite[Lemma 2.5]{GirRav86}. As a by-product we obtain the following \emph{Friedrichs inequality}, \cite{Schw16}, sometimes also referred to as a \emph{Maxwell inequality}, \cite{NPW15}, which provides a suitable coercivity bound for our problem. As usual, $H_0^1(Q)$ denotes the closure of $C_c^\infty(Q)$ in $H^1(Q)$. 

\begin{theorem}\label{friedrichs}
We have $H_0(\curl, Q)\cap H_0(\dive, Q)=H_0^1(Q)^3$, and there exists a constant $C>0$ such that for any ${\bf u}\in H_0^1(Q)^3$, we have
\begin{equation}\label{gaffney}
\displaystyle\|{\bf u}\|_{H^1(Q)}\leq C\, (\|\curl\,{\bf u}\|_{L^2(Q)^3}+ \|\dive\,{\bf u}\|_{L^2(Q)}).
\end{equation}
In particular, we have $\displaystyle\|{\bf u}\|_{\curl,Q}\leq C\, \|\curl\,{\bf u}\|_{L^2(Q)^3}$ for all ${\bf u}\in H_0^1(Q)^3\cap \Ker(\dive, Q)$.
\end{theorem}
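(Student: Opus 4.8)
The plan is to establish the identity $H_0(\curl,Q)\cap H_0(\dive,Q)=H_0^1(Q)^3$ first, and then deduce the inequality \eqref{gaffney} from it. The key observation is that $Q=\Omega\times I$ is a bounded domain which, although its lateral boundary $S$ is fractal, can be invaded by the bounded Lipschitz domains $Q_n$ with $Q_n\subset Q_{n+1}$ and $Q=\bigcup_n Q_n$. The inclusion $H_0^1(Q)^3\subset H_0(\curl,Q)\cap H_0(\dive,Q)$ is immediate: any vector field in $H_0^1(Q)^3$ is a limit of $C_c^\infty(Q)^3$ fields in the $H^1$-norm, and since both the $\curl$- and $\dive$-norms are dominated by the $H^1$-norm, such limits also lie in $H_0(\curl,Q)$ and $H_0(\dive,Q)$ respectively. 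The nontrivial content is the reverse inclusion, together with the componentwise identity
\begin{equation}\label{E:gaffneyid}
\int_Q|\nabla u_i|^2\,\de x=\int_Q|\curl{\bf u}|^2\,\de x+\int_Q|\dive{\bf u}|^2\,\de x
\end{equation}
valid for smooth compactly supported ${\bf u}$, which after summation over $i$ and integration by parts (all boundary terms vanishing because the support is compact in $Q$) gives the clean Gaffney-type relation. The quoted results \cite[Theorem 2.5]{amrouche} or \cite[Lemma 2.5]{GirRav86} provide exactly the equality $H_0(\curl,Q)\cap H_0(\dive,Q)=H_0^1(Q)^3$ for the relevant class of domains, so the main task is to verify that $Q$ falls into that class.

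First I would record \eqref{E:gaffneyid} on $C_c^\infty(Q)^3$ by a direct computation: expanding $|\curl{\bf u}|^2+|\dive{\bf u}|^2$ and integrating by parts, the cross terms reorganize into $\sum_i|\nabla u_i|^2$ plus a divergence that integrates to zero. Passing to the density closure, this yields the Friedrichs inequality \eqref{gaffney} on all of $H_0^1(Q)^3$ with constant $C$ coming from the Poincaré inequality on the bounded set $Q$ (needed to control $\|{\bf u}\|_{L^2}$ by $\|\nabla{\bf u}\|_{L^2}$, since $Q$ is bounded and the fields vanish on the boundary in the $H_0^1$ sense). The final assertion of the theorem then follows trivially: for ${\bf u}\in H_0^1(Q)^3\cap\Ker(\dive,Q)$ we have $\dive{\bf u}=0$, so \eqref{gaffney} collapses to $\|{\bf u}\|_{H^1(Q)}\leq C\|\curl{\bf u}\|_{L^2(Q)^3}$, and since $\|{\bf u}\|_{\curl,Q}\leq\|{\bf u}\|_{H^1(Q)}$ this gives the stated coercivity bound.

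For the equality $H_0(\curl,Q)\cap H_0(\dive,Q)=H_0^1(Q)^3$ itself, I would invoke \cite[Theorem 2.5]{amrouche}. The content there is that for a field ${\bf u}$ whose $\curl$ and $\dive$ both lie in $L^2$ and whose tangential and normal traces both vanish, one has ${\bf u}\in H^1$; the argument runs by extending ${\bf u}$ by zero across $\partial Q$ (legitimate precisely because both traces vanish, which is the role of the subscript $0$ in $H_0(\curl)$ and $H_0(\dive)$), observing that the extension $\tilde{\bf u}$ has $\curl\tilde{\bf u},\dive\tilde{\bf u}\in L^2(\R^3)$, and then using the global identity $-\Delta\tilde u_i=(\curl\curl-\nabla\dive)\tilde{\bf u}$ componentwise together with elliptic regularity on $\R^3$ to conclude $\tilde{\bf u}\in H^1(\R^3)^3$, hence ${\bf u}\in H^1(Q)^3$ with vanishing trace, i.e. ${\bf u}\in H_0^1(Q)^3$.

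The main obstacle I anticipate is justifying that the cited Lipschitz-domain results apply to the fractal domain $Q$, whose lateral boundary $S=F\times I$ is not Lipschitz. The safe route is to exploit the approximation scheme already set up in the paper: the $Q_n$ are genuine bounded Lipschitz domains for which \cite[Theorem 2.5]{amrouche} applies verbatim, and one transfers the identity to $Q$ through the density of $C_c^\infty(Q)^3$ and the monotone exhaustion $Q=\bigcup_n Q_n$. Concretely, \eqref{E:gaffneyid} and the Poincaré inequality are stable under the $H_0^1$-closure regardless of boundary regularity, so \eqref{gaffney} needs no Lipschitz hypothesis at all; it is only the reverse inclusion that requires care, and there the key point is that the zero-extension argument used in \cite{amrouche} uses only that the traces vanish, a condition encoded intrinsically in $H_0(\curl,Q)$ and $H_0(\dive,Q)$ via the generalized Stokes and Gauss-Green formulas \eqref{E:Stokes2} and \eqref{E:GaussGreen} established earlier, rather than any smoothness of $\partial Q$. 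I would therefore emphasize that the extension-by-zero step, not the elliptic regularity step, is where the fractal geometry must be handled, and argue that the vanishing of $\gamma_\tau{\bf u}$ and $\gamma_\nu{\bf u}$ in the dual-space sense is exactly what makes the zero extension lie in $H(\curl,\R^3)\cap H(\dive,\R^3)$.
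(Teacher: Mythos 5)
Your proposal is correct and follows essentially the same route as the paper: extend by zero to $\R^3$ (legitimate by the very definition of $H_0(\curl,Q)$ and $H_0(\dive,Q)$), use the whole-space identity $\|\nabla\tilde{\bf u}\|_{L^2(\R^3)^3}^2=\|\curl\tilde{\bf u}\|_{L^2(\R^3)^3}^2+\|\dive\tilde{\bf u}\|_{L^2(\R^3)}^2$ --- which the paper obtains via the Fourier transform and Plancherel, and which your step ``$-\Delta\tilde u_i=(\curl\curl-\nabla\dive)\tilde{\bf u}$ plus elliptic regularity on $\R^3$'' reproduces in equivalent form --- and conclude with Poincar\'e's inequality on the bounded set $Q$. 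Two cosmetic points only: your displayed identity should carry $\sum_i$ on the left (as written its right-hand side does not depend on $i$, contradicting the subsequent ``summation over $i$''), and the worry about the exhaustion by the Lipschitz domains $Q_n$ is unnecessary, since with the $H_0$-spaces defined as closures of $C_c^\infty(Q)^3$ the zero extension has $L^2$ curl and divergence immediately by density, exactly as in the paper's proof, which makes no use of the $Q_n$ here.
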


\begin{proof} We follow the cited references to prove $H_0(\curl, Q)\cap H_0(\dive, Q)\subset H_0^1(Q)^3$, the other inclusion is trivial. Given ${\bf u}\in H_0(\curl, Q)\cap H_0(\dive, Q)$ consider the trivial extension of ${\bf u}$ to $\R^3$,
\begin{center}
${\bf\tilde{u}}=
\begin{cases}
{\bf u}\quad &\text{in}\,\,Q,\\[2mm]
0 &\text{in}\,\,\R^3\setminus\overline Q.
\end{cases}$
\end{center}
Since ${\bf u}\in H_0(\curl, Q)\cap H_0(\dive, Q)$, it evidently follows that $\curl{\bf\tilde{u}}\in L^2(\R^3)^3$ and $\dive{\bf\tilde{u}}\in L^2(\R^3)$. By definition ${\bf\tilde{u}}$ has compact support (in the distributional sense), so that by Schwartz' Paley-Wiener Theorem, \cite[Theorem 7.3.1]{Hoermander}, the Fourier transform ${\bf\hat{u}}$ of ${\bf\tilde{u}}$ is analytic. The above properties can be rewritten algebraically as 
\begin{center}
$(\xi_2\hat{u}_3-\xi_3\hat{u}_2,\xi_3\hat{u}_1-\xi_1\hat{u}_3,\xi_1\hat{u}_2-\xi_2\hat{u}_1)\in L^2(\R^3)^3$\quad and \quad$\xi_1\hat{u}_1+\xi_2\hat{u}_2+\xi_3\hat{u}_3\in L^2(\R^3)$.
\end{center}
It then follows that, for $i,j=1,2,3$,
\begin{equation}\label{E:target}
\|\xi_i\hat{u}_j\|_{L^2(\R^3)}\leq\|\curl{\bf\tilde{u}}\|_{L^2(\R^3)^3}+\|\dive{\bf\tilde{u}}\|_{L^2(\R^3)}.
\end{equation}
Note that for instance $(\xi_1\hat{u}_2-\xi_2\hat{u})^2\geq (\xi_1\hat{u}_2)^2-[(\xi_1\hat{u}_1)^2+(\xi_2\hat{u}_2)^2]+(\xi_2\hat{u}_1)^2$, and by rearranging and summing up we obtain \eqref{E:target}.
It follows that
\begin{equation*}
\|\nabla {\bf u}\|_{L^2(Q)^3}\leq\|\curl {\bf u}\|_{L^2(Q)^3}+\|\dive{{\bf u}}\|_{L^2(Q)}.
\end{equation*}
Hence ${\bf u}\in H^1_0(Q)^3$, and using Poincar\'e' inequality for $Q$ we obtain (\ref{gaffney}).
\end{proof}

We say that ${\bf u}$ is a \emph{weak solution of $(P)$} if ${\bf u}\in H_0^1(Q)^3\cap \Ker(\dive, Q)$ and $a({\bf u},{\bf v})=f({\bf v})$ for all ${\bf v}\in H_0^1(Q)^3\cap \Ker(\dive, Q)$. 

Existence and uniqueness of a solution are now easily seen from the Lax-Milgram Theorem  (see \cite{quarval}) together with Theorem \ref{friedrichs}. 

\begin{theorem}\label{exandunique}
For any ${\bf J}\in L^2(Q)^3$ satisfying (\ref{E:compatibility}) there exists a unique weak solution ${\bf u}$ of problem (P). Moreover, there exists a positive constant $C=C(Q,\mu_0,\mu_1)$ such that
\begin{center}
$\displaystyle\|{\bf u}\|_{\curl,Q}\leq C \|{\bf J}\|_{L^2(Q)^3}$.
\end{center}
\end{theorem}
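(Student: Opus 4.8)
The plan is to cast the weak formulation as a coercive variational problem on the Hilbert space $V:=H_0^1(Q)^3\cap\Ker(\dive,Q)$ and to invoke the Lax--Milgram Theorem. First I would record that $V$ is a closed subspace of $H^1(Q)^3$, being the intersection of the closed subspace $H_0^1(Q)^3$ with the kernel of the bounded operator $\dive\colon H^1(Q)^3\to L^2(Q)$; hence $V$ is itself a Hilbert space. By Theorem~\ref{friedrichs} the norms $\|\cdot\|_{H^1(Q)}$ and $\|\cdot\|_{\curl,Q}$ are equivalent on $V$, so I may work throughout with $\|\cdot\|_{\curl,Q}$, which is the norm appearing in the asserted estimate.

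Second, I would verify the three hypotheses of Lax--Milgram for $a$ and $f$ restricted to $V$. Continuity of $f$ is immediate from Cauchy--Schwarz, $|f(\mathbf{w})|\le\|\mathbf{J}\|_{L^2(Q)^3}\|\mathbf{w}\|_{L^2(Q)^3}\le\|\mathbf{J}\|_{L^2(Q)^3}\|\mathbf{w}\|_{\curl,Q}$. Continuity of $a$ follows from the lower bound $\mu\ge\mu_0$ together with Cauchy--Schwarz, giving $|a(\mathbf{u},\mathbf{w})|\le \mu_0^{-1}\|\curl\mathbf{u}\|_{L^2(Q)^3}\|\curl\mathbf{w}\|_{L^2(Q)^3}\le\mu_0^{-1}\|\mathbf{u}\|_{\curl,Q}\|\mathbf{w}\|_{\curl,Q}$. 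The decisive point is coercivity: using the upper bound $\mu\le\mu_1$ one has $a(\mathbf{u},\mathbf{u})=\int_Q\mu^{-1}|\curl\mathbf{u}|^2\,\de x\ge\mu_1^{-1}\|\curl\mathbf{u}\|_{L^2(Q)^3}^2$, and the Friedrichs inequality of Theorem~\ref{friedrichs} upgrades this to $a(\mathbf{u},\mathbf{u})\ge (\mu_1 C^2)^{-1}\|\mathbf{u}\|_{\curl,Q}^2$ for all $\mathbf{u}\in V$. This is the step that genuinely uses the geometry of $Q$, and I expect no serious obstacle beyond bookkeeping here, since the real difficulty has already been absorbed into Theorem~\ref{friedrichs}.

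Third, Lax--Milgram then yields a unique $\mathbf{u}\in V$ with $a(\mathbf{u},\mathbf{v})=f(\mathbf{v})$ for all $\mathbf{v}\in V$, which is exactly a weak solution of $(P)$. The a priori estimate follows by testing with $\mathbf{v}=\mathbf{u}$: coercivity and continuity of $f$ give $(\mu_1 C^2)^{-1}\|\mathbf{u}\|_{\curl,Q}^2\le a(\mathbf{u},\mathbf{u})=f(\mathbf{u})\le\|\mathbf{J}\|_{L^2(Q)^3}\|\mathbf{u}\|_{\curl,Q}$, whence $\|\mathbf{u}\|_{\curl,Q}\le\mu_1 C^2\|\mathbf{J}\|_{L^2(Q)^3}$, with a constant depending only on $Q$, $\mu_0$ and $\mu_1$.

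Finally, I would explain the role of the compatibility condition \eqref{E:compatibility}. For any $\mathbf{v}\in\Ker(\curl,Q)$ one has $a(\cdot,\mathbf{v})\equiv 0$ since $\curl\mathbf{v}=0$, so the identity $a(\mathbf{u},\mathbf{v})=f(\mathbf{v})$ can hold on all of $H_0(\curl,Q)$ only if $f$ annihilates $\Ker(\curl,Q)$, which is precisely \eqref{E:compatibility}. Conversely, by Proposition~\ref{P:coincidence} combined with Theorem~\ref{friedrichs} one has $V=H_{0,\bot}(\curl,Q)\cap\Ker(\dive,Q)$, so that, granting \eqref{E:compatibility}, the solution produced on $V$ solves the original problem $(P)$ in the full sense, the curl-free directions contributing nothing to either side. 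This closes the argument.
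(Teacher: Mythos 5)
Your proposal is correct and follows exactly the route the paper takes: the paper proves Theorem~\ref{exandunique} in one line by invoking the Lax--Milgram Theorem together with the coercivity supplied by Theorem~\ref{friedrichs} on the space $H_0^1(Q)^3\cap\Ker(\dive,Q)$, which is precisely the argument you spell out (your continuity, coercivity, and testing-with-$\mathbf{u}$ steps are the standard details left implicit there). Your closing discussion of the role of \eqref{E:compatibility} likewise matches the paper's preceding remarks on $\Ker(\curl,Q)$ and $H_{0,\bot}(\curl,Q)$, so there is nothing to correct.
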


The rest of this section is devoted to the proof of Proposition \ref{P:coincidence}. The first observation follows from  (\ref{E:GaussGreen}) by the same arguments as used to show \cite[Theorem 2.6]{GirRav86}, we recall them for convenience.
Let $\Ker \gamma_\nu:=\{{\bf w}\in H(\dive, Q)\,:\,{\bf n} \cdot {\bf w}=0\,\,\text{on}\;\partial Q \}$.
\begin{theorem}\label{T:A}
We have $H_0(\dive, Q)=\Ker \gamma_\nu$.
\end{theorem}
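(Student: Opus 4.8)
The plan is to prove the two inclusions separately: the inclusion $H_0(\dive, Q)\subseteq \Ker \gamma_\nu$ is immediate from continuity of the normal trace, while the reverse inclusion $\Ker \gamma_\nu\subseteq H_0(\dive, Q)$ is obtained by a Hahn--Banach duality argument built on the Gauss--Green formula \eqref{E:GaussGreen}.

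First I would establish $H_0(\dive, Q)\subseteq \Ker \gamma_\nu$. For any ${\bf w}\in C_c^\infty(Q)^3$ and any $v\in \tilde B^{2,2}_1(\partial Q)$ with $H^1(Q)$-extension $w$, an ordinary integration by parts on the compact support of ${\bf w}$ gives $\int_Q {\bf w}\cdot\nabla w\,\de x+\int_Q(\dive {\bf w})w\,\de x=0$, so that $\gamma_\nu {\bf w}=0$. Since $\gamma_\nu$ is bounded from $H(\dive, Q)$ into $(\tilde B^{2,2}_1(\partial Q))'$ by Theorem \ref{gauss-green}(i), its kernel is closed; as $H_0(\dive, Q)$ is by definition the $H(\dive, Q)$-closure of $C_c^\infty(Q)^3$, the inclusion follows.

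For the reverse inclusion I would argue by Hahn--Banach: it suffices to show that every bounded linear functional $\ell$ on $H(\dive, Q)$ vanishing on $H_0(\dive, Q)$ also vanishes on $\Ker\gamma_\nu$. Representing $\ell$ via the Riesz theorem in the Hilbert space $H(\dive, Q)$, there are ${\bf g}\in L^2(Q)^3$ and $h\in L^2(Q)$ with $\ell({\bf v})=\int_Q {\bf g}\cdot {\bf v}\,\de x+\int_Q h\,\dive {\bf v}\,\de x$ for all ${\bf v}\in H(\dive, Q)$. Testing the relation $\ell=0$ on ${\bf v}\in C_c^\infty(Q)^3$ yields $\int_Q {\bf g}\cdot{\bf v}\,\de x=-\int_Q h\,\dive{\bf v}\,\de x$, that is ${\bf g}=\nabla h$ in the distributional sense; since ${\bf g}\in L^2(Q)^3$, this gives $h\in H^1(Q)$.

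Finally, for ${\bf u}\in\Ker\gamma_\nu$ the quantity $\ell({\bf u})=\int_Q {\bf u}\cdot\nabla h\,\de x+\int_Q h\,\dive{\bf u}\,\de x$ is, by the very definition of the generalized normal trace $\gamma_\nu$ stated before Theorem \ref{gauss-green} (taking $w=h$ as extension of $h|_{\partial Q}$), equal to the pairing $\langle \gamma_\nu{\bf u}, h|_{\partial Q}\rangle$; this pairing is legitimate because $h|_{\partial Q}\in \tilde B^{2,2}_1(\partial Q)$ by Corollary \ref{P:trace}. As $\gamma_\nu{\bf u}=0$, we conclude $\ell({\bf u})=0$, so every such $\ell$ vanishes on $\Ker\gamma_\nu$ and hence $\Ker\gamma_\nu\subseteq H_0(\dive, Q)$. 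I expect the only genuinely delicate points to be the distributional identification ${\bf g}=\nabla h$ and the verification that $h|_{\partial Q}$ lands in the correct Besov-type trace space so that the pairing with $\gamma_\nu{\bf u}$ is defined; the fractal nature of $\partial Q$ enters solely through Theorem \ref{gauss-green} and Corollary \ref{P:trace}, which are already available, so the argument otherwise runs parallel to the Lipschitz case of \cite[Theorem 2.6]{GirRav86}.
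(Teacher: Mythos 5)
Your proof is correct and follows essentially the same route as the paper's: both establish the nontrivial inclusion by the duality argument of \cite[Theorem 2.6]{GirRav86}, representing a functional vanishing on $C_c^\infty(Q)^3$ by an $L^2$ pair, identifying the vector part as a distributional gradient so that the scalar part lies in $H^1(Q)$, and then annihilating the pairing via the generalized Gauss--Green formula \eqref{E:GaussGreen} together with the trace result of Corollary \ref{P:trace}. The only cosmetic difference is that you phrase the duality as a Hahn--Banach separation in the ambient space $H(\dive,Q)$ (also spelling out the easy inclusion the paper leaves implicit), whereas the paper applies Riesz representation directly in the closed subspace $\Ker\gamma_\nu$ to prove density of $C_c^\infty(Q)^3$ there; the two formulations are interchangeable.
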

\begin{proof}
It suffices to show that $C_c^\infty(Q)^3$ is dense in $\Ker \gamma_\nu$. Let $l\in (\Ker \gamma_\nu)'$ and let ${\bf v}\in \Ker \gamma_\nu$ be such that 
\[\left\langle l,{\bf u}\right\rangle_{(\Ker \gamma_\nu)',\Ker \gamma_\nu}=\int_Q {\bf v}\cdot{\bf u}\,\de x+\int_Q \widetilde{v}\dive {\bf u}\,\de x,\quad {\bf u}\in \Ker \gamma_\nu,\]
where $\widetilde{v}=\dive {\bf v}$. Suppose now that $l\equiv 0$ on $C_c^\infty(Q)^3$. Then ${\bf v}=\nabla \widetilde{v}$ in distributional sense on $Q$, and since ${\bf v}\in L^2(Q)^3$, it follows that $\widetilde{v}\in H^1(Q)$. By (\ref{E:GaussGreen}) therefore 
\[\left\langle l,{\bf u}\right\rangle_{(\Ker \gamma_\nu)',\Ker \gamma_\nu}=\left\langle {\bf{u}}\cdot{\bf{n}}|_{\partial Q},\widetilde{v}|_{\partial Q}\right\rangle_{(\tilde B^{2,2}_1(\partial Q))', \tilde B^{2,2}_{1}(\partial Q)}=0,\quad {\bf u}\in \Ker \gamma_\nu.\]
This implies the desired density, see \cite[p. 26, property (2.14)]{GirRav86}. 
\end{proof}

The second item is an adaption of \cite[Theorem 2.7]{GirRav86} about the complement of $\Ker(\dive, Q)$, seen as a closed subspace of $L^2(Q)^3$. Again we briefly recall the classical proof.
\begin{theorem}\label{T:B}
The space $L^2(Q)^3$ admits the orthogonal decomposition 
\[L^2(Q)^3=\Ker(\dive,Q)\oplus\left\lbrace \nabla q: q\in H^1(Q)\right\rbrace.\]
\end{theorem}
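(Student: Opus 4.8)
The plan is to prove the orthogonal decomposition
\[
L^2(Q)^3=\Ker(\dive,Q)\oplus\left\lbrace \nabla q: q\in H^1(Q)\right\rbrace
\]
by showing that the two summands are mutually orthogonal closed subspaces whose sum is all of $L^2(Q)^3$. First I would verify that $G:=\{\nabla q: q\in H^1(Q)\}$ is a closed subspace of $L^2(Q)^3$. Closedness follows from the Poincar\'e inequality on the bounded domain $Q$: if $\nabla q_k\to {\bf g}$ in $L^2(Q)^3$, then after normalizing the $q_k$ to have zero mean the sequence is Cauchy in $H^1(Q)$, hence converges to some $q\in H^1(Q)$ with $\nabla q={\bf g}$. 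I would similarly note that $\Ker(\dive,Q)$ is closed, being the kernel of a bounded operator on the Hilbert space $H_0(\dive,Q)$ (or directly as a closed subspace of $L^2(Q)^3$).

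Next I would establish orthogonality. Take ${\bf u}\in \Ker(\dive,Q)$ and $\nabla q\in G$ with $q\in H^1(Q)$. Since ${\bf u}\in H_0(\dive,Q)$ we have ${\bf u}\cdot{\bf n}|_{\partial Q}=0$ by Theorem \ref{T:A}, and $\dive {\bf u}=0$. Applying the generalized Gauss-Green identity \eqref{E:GaussGreen} with this ${\bf u}$ and with $w=q$ gives
\[
\int_Q {\bf u}\cdot\nabla q\,\de x=\left\langle {\bf u}\cdot{\bf n}|_{\partial Q},q|_{\partial Q}\right\rangle+\int_Q(\dive {\bf u})\,q\,\de x=0,
\]
so the two subspaces are orthogonal in $L^2(Q)^3$.

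It then remains to show the sum exhausts $L^2(Q)^3$, equivalently that the orthogonal complement of $G$ inside $L^2(Q)^3$ is contained in $\Ker(\dive,Q)$. So let ${\bf u}\in L^2(Q)^3$ satisfy $\int_Q {\bf u}\cdot\nabla q\,\de x=0$ for all $q\in H^1(Q)$. Testing first against $q\in C_c^\infty(Q)$ shows $\dive {\bf u}=0$ in the distributional sense, so in particular ${\bf u}\in H(\dive,Q)$ with $\dive {\bf u}=0$; thus ${\bf u}\in \Ker\gamma_\nu$-type conditions must be checked. Testing against general $q\in H^1(Q)$ and using \eqref{E:GaussGreen} once more yields $\langle {\bf u}\cdot{\bf n}|_{\partial Q},q|_{\partial Q}\rangle=0$ for all such $q$; since $q|_{\partial Q}$ ranges over all of $\tilde B^{2,2}_1(\partial Q)$ by the surjectivity of the trace operator in Corollary \ref{P:trace}, this forces ${\bf u}\cdot{\bf n}|_{\partial Q}=0$, i.e. ${\bf u}\in \Ker\gamma_\nu=H_0(\dive,Q)$ by Theorem \ref{T:A}. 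Combined with $\dive {\bf u}=0$ this gives ${\bf u}\in \Ker(\dive,Q)$, completing the decomposition.

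The main obstacle I anticipate is the final surjectivity/density step: one must argue carefully that the vanishing of the boundary pairing against \emph{all} traces $q|_{\partial Q}$ genuinely forces ${\bf u}\cdot{\bf n}|_{\partial Q}=0$ as an element of the dual $(\tilde B^{2,2}_1(\partial Q))'$, and hence membership in $H_0(\dive,Q)$ via Theorem \ref{T:A}. This relies essentially on the trace operator being surjective onto $\tilde B^{2,2}_1(\partial Q)$ (Corollary \ref{P:trace}), so that the test functions $q|_{\partial Q}$ really do separate points of the dual space. Everything else is routine Hilbert-space orthogonal-complement bookkeeping together with the Poincar\'e inequality; the classical model for the whole argument is \cite[Theorem 2.7]{GirRav86}, adapted to the fractal boundary by substituting the generalized Gauss-Green formula \eqref{E:GaussGreen} for the classical one.
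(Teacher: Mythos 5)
Your proposal is correct and takes essentially the same route as the paper's proof: both show $X^{\bot}=\Ker(\dive,Q)$ for $X=\left\lbrace \nabla q: q\in H^1(Q)\right\rbrace$ by combining the generalized Gauss--Green identity \eqref{E:GaussGreen} with Theorem \ref{T:A} --- first killing the boundary pairing for ${\bf u}\in\Ker(\dive,Q)$, then, for ${\bf u}\in X^{\bot}$, testing against $C_c^\infty(Q)$ to get $\dive {\bf u}=0$ (hence ${\bf u}\in H(\dive,Q)$) and against general $q\in H^1(Q)$ to conclude ${\bf u}\cdot{\bf n}|_{\partial Q}=0$ and thus ${\bf u}\in H_0(\dive,Q)$. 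The extra details you supply --- closedness of the gradient space via the Poincar\'e inequality and the explicit appeal to surjectivity of the trace operator from Corollary \ref{P:trace} to deduce that the dual element ${\bf u}\cdot{\bf n}|_{\partial Q}$ vanishes --- are exactly the steps the paper's terser proof leaves implicit, so they are a welcome elaboration rather than a different argument.
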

\begin{proof}
The space $X:=\left\lbrace \nabla q: q\in H^1(Q)\right\rbrace$ is a closed subspace of $L^2(Q)^3$, so it suffices to show that $X^\bot=H:=\Ker(\dive,Q)$. If ${\bf u}\in H$, then by (\ref{E:GaussGreen}) and Theorem \ref{T:A} we have 
\begin{equation}\label{E:kill}
\int_Q {\bf u}\cdot \nabla q\, \de x=0,\quad q\in H^1(Q),
\end{equation}
so that $H\subset X^\bot$. If ${\bf u}\in L^2(Q)^3$ satisfies (\ref{E:kill}), then taking $q\in C_c^\infty(Q)^3$ implies $\dive {\bf u}=0$ and in particular, ${\bf u}\in H(\dive, Q)$, so that (\ref{E:GaussGreen}) may be applied and yields ${\bf u}\cdot{\bf n}=0$, i.e. ${\bf u}\in H_0(\dive, Q)$ and therefore ${\bf u}\in H$. This shows $X^\bot=H$.
\end{proof}

Adaptions of \cite[Theorem 2.9 and Corollary 2.9]{GirRav86} provide a suitable version of the classical fact that a curl free differentiable vector field in a simply connected domain is a gradient field. We interpret $\curl$ as an operator on $L^2(Q)^3$ in the sense of distributions on $Q$. 
\begin{theorem}\label{T:C}
A vector ${\bf u}\in L^2(Q)^3$ satisfies $\curl {\bf u}=0$ if and only if there exists a function $q\in H^1(Q)/\mathbb{R}$ such that ${\bf u}=\nabla q$. 
\end{theorem}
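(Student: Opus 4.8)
The plan is to settle the easy implication directly and to obtain the nontrivial one by exhausting $Q$ with the simply connected Lipschitz domains $Q_n$, applying the classical curl--potential theorem on each of them, and gluing the resulting local potentials. If ${\bf u}=\nabla q$ with $q\in H^1(Q)$, then $\curl{\bf u}=\curl\nabla q=0$ holds in the distributional sense on $Q$ and is unchanged under $q\mapsto q+c$, so this direction is immediate and well posed on $H^1(Q)/\R$.

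For the converse, let ${\bf u}\in L^2(Q)^3$ satisfy $\curl{\bf u}=0$. I would first record that each $\Omega_n$ is the bounded region enclosed by the simple closed polygon $F_n$, hence simply connected, so that each $Q_n=\Omega_n\times I$ is a simply connected bounded Lipschitz domain, with $Q_n\subset Q_{n+1}$ and $Q=\bigcup_n Q_n$; note that simple connectedness is needed only at this pre-fractal level. Restricting ${\bf u}$ to $Q_n$ yields a curl-free field in $L^2(Q_n)^3$, so by \cite[Theorem 2.9]{GirRav86} there is $q_n\in H^1(Q_n)$, unique up to an additive constant, with ${\bf u}|_{Q_n}=\nabla q_n$. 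Since each $Q_n$ is connected and $\nabla(q_{n+1}|_{Q_n}-q_n)=0$, the difference $q_{n+1}|_{Q_n}-q_n$ is a constant; subtracting it inductively we may assume $q_{n+1}|_{Q_n}=q_n$ for every $n$. The consistent family $\{q_n\}$ then defines a single $q\in H^1_{\loc}(Q)$ with $\nabla q={\bf u}$ on $Q$.

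It remains to promote $q\in H^1_{\loc}(Q)$ to $q\in H^1(Q)$, that is, to show $q\in L^2(Q)$ after fixing the free constant. Here the geometry enters: the snowflake domain $\Omega$ is an $(\varepsilon,\delta)$-domain and hence a Sobolev extension domain, so $Q=\Omega\times I$ supports a Poincar\'e--Wirtinger inequality relative to the fixed subdomain $Q_1$, namely $\|w-\overline{w}^{Q_1}\|_{L^2(Q)}\le C\|\nabla w\|_{L^2(Q)}$ for $w\in H^1(Q)$, with $C$ depending only on $Q$ and $Q_1$. Normalizing so that $\overline{q}^{Q_1}=0$ and applying this inequality to the truncations $T_kq=\max(-k,\min(k,q))$, which belong to $H^1(Q)$ and satisfy $|\nabla T_kq|\le|\nabla q|=|{\bf u}|$, gives $\|T_kq\|_{L^2(Q)}\le C\|{\bf u}\|_{L^2(Q)^3}+o(1)$ as $k\to\infty$, since $\overline{T_kq}^{Q_1}\to\overline{q}^{Q_1}=0$ by dominated convergence on $Q_1$. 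Monotone convergence then yields $q\in L^2(Q)$, hence $q\in H^1(Q)$ and ${\bf u}=\nabla q$, determined modulo constants as claimed.

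The step I expect to be the main obstacle is this last one: passing from a merely local potential to a global $H^1(Q)$ potential on a domain with fractal boundary, which forces one to exploit a genuinely global Poincar\'e inequality on $Q$ rather than the Poincar\'e constants of the pre-fractal domains $Q_n$ (these need not stay bounded, since the $Q_n$ are not uniformly Lipschitz). The truncation-and-Fatou device circumvents the circularity of trying to estimate $\|q_n\|_{L^2(Q_n)}$ directly, and relies only on the extension-domain property of the snowflake, which is available from the geometry fixed in Section \ref{geometria}.
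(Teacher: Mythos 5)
Your proof is correct, and while its skeleton (exhaustion by simply connected Lipschitz domains, local potentials from the classical theory, gluing by adjusting constants) coincides with the paper's, the execution differs in two substantive ways. First, the paper does not work on the pre-fractal cylinders $Q_n$ at all: it extends ${\bf u}$ by zero, mollifies, and uses an auxiliary exhaustion $(O_n)$ of simply connected Lipschitz domains with $\overline{O}_n\subset Q$, precisely because the mollified field $\varrho_\varepsilon\ast\widetilde{\bf u}$ is only curl-free at distance $\varepsilon$ from $\partial Q$ (the zero extension need not be curl-free across the boundary); the potentials $q_n$ on $O_n$ are then obtained as $\varepsilon\to 0$ limits of potentials of the smooth fields. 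Your route avoids mollification entirely by invoking the $L^2$ version of the curl-free-implies-gradient theorem of \cite{GirRav86} directly on each $Q_n$, which is legitimate since that theorem only needs $\curl{\bf u}=0$ distributionally on a bounded simply connected Lipschitz domain, and $Q_n\subset Q$ open suffices (note that $\overline{Q}_n\not\subset Q$, since the pre-fractal vertices lie on $F$, so the paper's mollification strategy could not have used the $Q_n$ -- this is exactly what your choice sidesteps). Second, and more significantly, you supply a step the paper leaves implicit: the gluing only produces $q\in H^1_{\loc}(Q)$ with $\nabla q={\bf u}\in L^2(Q)^3$, and the paper's proof simply ends by asserting $q$ has ``the desired properties,'' whereas membership $q\in L^2(Q)$, hence $q\in H^1(Q)$, requires an argument on a domain with fractal boundary. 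Your truncation device is sound: $T_kq\in H^1(Q)$ with $|\nabla T_kq|\le|{\bf u}|$, the Poincar\'e--Wirtinger inequality relative to the fixed subdomain $Q_1$ is available because $\Omega$, and hence $Q=\Omega\times I$, is an extension domain (a fact the paper itself already uses for Corollary \ref{P:trace}), the normalization $\overline{q}^{\,Q_1}=0$ is meaningful since $q|_{Q_1}\in H^1(Q_1)\subset L^1(Q_1)$, and monotone convergence closes the estimate. So your proof buys a self-contained treatment of the $H^1_{\loc}$-to-$H^1$ promotion at the cost of relying on the $L^2$ Poincar\'e lemma as a black box, while the paper's buys a reduction to the smooth Poincar\'e lemma at the cost of the mollification detour and of leaving the global integrability of $q$ unaddressed.
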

\begin{proof}
If ${\bf u}=\nabla q$ with some $q\in H^1(Q)$ then clearly $\curl {\bf u}=0$. 

Suppose ${\bf u}\in L^2(Q)^3$ is such that $\curl {\bf u}=0$. Let $\widetilde{{\bf u}}$ be the extension of ${\bf u}$ to $\R^3$ by zero on $Q^c$ and let $(\varrho_\varepsilon)_{\varepsilon>0}\subset C_c^\infty(\R^3)$ be a standard mollifier. Then we have $\curl\varrho_\varepsilon\ast{\widetilde{\bf u}}=\varrho_\varepsilon\ast\curl{\widetilde{\bf u}}$ and $\varrho_\varepsilon\ast{\widetilde{\bf u}}\in C_c^\infty(\R^3)^3$ for any $\varepsilon>0$, and $\lim_{\varepsilon\to 0} \varrho_\varepsilon\ast{\widetilde{\bf u}}=\widetilde{{\bf u}}$ in $L^2(Q)^3$.

Let $(O_n)_n$ be an increasing sequence of simply connected Lipschitz domains $O_n$ such that $\overline{O}_n\subset Q$ for all $n$ and $Q=\bigcup_{n=1}^\infty O_n$. Because the two-dimensional snowflake domain can be exhausted by increasing simply connected Lipschitz domains whose closures are contained in the snowflake domain, see for instance \cite[Section 6]{JMAA}, it follows easily that such a sequence $(O_n)_n$ exists.

If now $n$ is fixed and $\varepsilon>0$ is small enough then $\bigcup_{x\in O_n} B(x,\varepsilon)\subset Q$ and therefore $\curl  \varrho_\varepsilon\ast{\widetilde{\bf u}}=0$ in $O_n$. Consequently there is a function $q_\varepsilon\in H^1(O_n)$ such that $\varrho_\varepsilon\ast{\widetilde{\bf u}}=\nabla q_\varepsilon$ in $O_n$. Since $\lim_{\varepsilon\to 0} \nabla q_\varepsilon=\widetilde{{\bf u}} \in L^2(O_n)^3$, the limit $q_n:=\lim_{\varepsilon\to 0} q_{\varepsilon}$ exists in $H^1(O_n)/\R$, and clearly ${\bf u}=\nabla q_n$ in $O_n$. 

Varying $n$, we have $\nabla q_n=\nabla q_{n+1}$ in $O_n$, i.e. $q_n-q_{n+1}$ is constant on $O_n$. We can choose these constants so that $q_{n+1}=q_n$ in $O_n$ for all $n\geq 1$, and then consistently define $q:=q_n$ on $O_n$ for all $n\geq 1$ to obtain a function $q$ with the desired properties.
\end{proof}

Theorem \ref{T:C} implies a description of $\Ker(\curl, Q)$.
\begin{folg}\label{C:kercurl}
We have 
\[\Ker(\curl,Q)=\left\lbrace w\in H_0(\curl, Q): w=\nabla q\quad \text{for some $q\in H^1(Q)$}\right\rbrace.\]
\end{folg}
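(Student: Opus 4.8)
The plan is to deduce the corollary directly from Theorem \ref{T:C}, since by definition $\Ker(\curl, Q)=\{w\in H_0(\curl,Q):\curl w=0\}$, and Theorem \ref{T:C} already characterizes \emph{every} curl-free field in $L^2(Q)^3$ as a gradient of an $H^1(Q)/\mathbb{R}$ function. The only remaining task is to intersect that characterization with the membership constraint $w\in H_0(\curl,Q)$ and to verify the two inclusions; no new analysis is required.

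First I would establish the inclusion $\subseteq$. Take $w\in\Ker(\curl,Q)$, so that $w\in H_0(\curl,Q)$ and $\curl w=0$ in the distributional sense on $Q$. Applying Theorem \ref{T:C} to $w$, viewed as an element of $L^2(Q)^3$, yields a class $q\in H^1(Q)/\mathbb{R}$ with $w=\nabla q$; choosing any representative $q\in H^1(Q)$ then exhibits $w$ as an element of the right-hand set.

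Next I would check the reverse inclusion $\supseteq$, which is immediate. If $w\in H_0(\curl,Q)$ satisfies $w=\nabla q$ for some $q\in H^1(Q)$, then $\curl w=\curl\nabla q=0$, so $w\in\Ker(\curl,Q)$. Equivalently, one may phrase the whole argument as the single observation that, since the sets $\{w\in L^2(Q)^3:\curl w=0\}$ and $\{w\in L^2(Q)^3:w=\nabla q\text{ for some }q\in H^1(Q)\}$ coincide by Theorem \ref{T:C}, their intersections with $H_0(\curl,Q)$ coincide as well, and these intersections are precisely the two sides of the claimed identity.

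I do not expect any serious obstacle: the entire content of the statement is carried by Theorem \ref{T:C}, whose proof already dealt with the genuine difficulty, namely exhausting $Q$ by an increasing sequence of simply connected Lipschitz subdomains $O_n$ with $\overline{O}_n\subset Q$ and patching the local mollified potentials into a single global function $q$. The corollary is merely the restriction of that characterization to fields already assumed to lie in $H_0(\curl,Q)$.
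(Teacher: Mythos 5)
Your proposal is correct and coincides with the paper's own treatment: the paper gives no separate proof of Corollary \ref{C:kercurl}, presenting it as an immediate consequence of Theorem \ref{T:C}, exactly as you argue by intersecting the gradient characterization of curl-free $L^2(Q)^3$ fields with $H_0(\curl,Q)$ and noting $\curl\nabla q=0$ for the converse. Your handling of the quotient $H^1(Q)/\mathbb{R}$ by picking a representative is the right (and only) point needing a word, and it is unproblematic since $Q$ is bounded.
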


We can now easily prove Proposition \ref{P:coincidence}.
\begin{proof}
If ${\bf u}\in H_0(\curl, Q)\cap H_0(\dive, Q)$ is in $H_{0,\bot}(\curl, Q)$ then by (\ref{E:GaussGreen}) and Corollary \ref{C:kercurl} it satisfies 
\[\int_Q (\dive {\bf u})q\,\de x=\int_Q {\bf u}\cdot \nabla q\,\de x=0\]
for all $q\in H^1(Q)$ such that $\nabla q\in H_0(\curl, Q)$, and in particular, for all $q\in C_c^\infty(Q)$, what implies $\dive {\bf u}=0$ in $L^2(Q)$. The opposite inclusion follows similarly from (\ref{E:GaussGreen}).
\end{proof}

\begin{remark} Using \cite[Theorem 3]{wallin} one can show that $H_0^1(Q)$ coincides with the space of all elements of $H^1(Q)$ having zero trace on $\partial Q$. With Remark \ref{R:curltrace} and Theorem \ref{T:A} in mind one can therefore view Theorem \ref{friedrichs} as a rough paraphrase of the statement that if in the formal identity
${\bf u}|_{\partial Q}={\bf n}({\bf u}\cdot{\bf n})|_{\partial Q}+{\bf n}\times{\bf u}|_{\partial Q}$ both summands on the right hand side are zero, then we have ${\bf u}|_{\partial Q}=0$ in the sense of traces.
\end{remark}

\section{Weak solutions and H\"older regularity in 2D}
\setcounter{equation}{0}

We now reduce the three-dimensional problem $(P)$ to a magnetostatic problem in 2D. If ${\bf J}(x)=(0,0,J(x_1,x_2))$ and $\mu=\mu(x_1,x_2)$ then it is reasonable to assume that also the magnetic induction ${\bf B}$ does not depend on the $x_3$ coordinate. Therefore it is possible to choose a magnetic vector potential of form ${\bf u}=(0,0,u(x_1,x_2))$. Problem $(P)$ then reduces to finding a function $u=u(x_1,x_2)$ on $\Omega$ such that
\begin{equation}\label{sistema 2d}
(\bar P)\begin{cases}
-\dive \left(\frac{1}{\mu}\nabla u\right)=J\quad &\text{in}\,\, \Omega,\\[2mm]
u=0 &\text{on}\,\,\partial\Omega.
\end{cases}
\end{equation}
From this two-dimensional problem we obtain a magnetic induction of form ${\bf B}=(u_{x_2}, -u_{x_1},0)$. The domain $\Omega= \{(x_1,x_2)\in \R^2\,:\,(x_1,x_2,0)\in Q \}$ is a cross section of $Q$, i.e. $\Omega\times\{0\}= Q\cap \{x\in \R^3\,:\,x_3=0\}$, and the differential operator $\nabla u$ (applied to the scalar function $u$) operates only on the variables $x_1$ and $x_2$, i.e. $\nabla u=(u_{x_1}, u_{x_2})$. 

The energy form associated with $(\bar P)$ is 
\begin{equation}
a(u,v)=\int_\Omega \frac{1}{\mu(x)}\nabla u\nabla v\,\de x, \quad u,v\in H^1_0(\Omega),
\label{definforma2d}
\end{equation}
where as usual, $H^1_0(\Omega)$ denotes the closure in $H^1(\Omega)$ of the smooth functions with compact support in $\Omega$. 

\begin{prop}\label{propex}
For every given $J\in L^2(\Omega)$, there exists a unique weak solution in $H^1_0(\Omega)$ of problem $(\bar P)$, i.e. a function $u\in H^1_0(\Omega)$ such that
\begin{center}
$\displaystyle a(u,v)=\int_\Omega J\,v\,\de x, \quad v\in H^1_0(\Omega)$.
\end{center}
\end{prop}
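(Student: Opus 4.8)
The plan is to apply the Lax--Milgram theorem to the symmetric bilinear form $a$ from \eqref{definforma2d} on the Hilbert space $H_0^1(\Omega)$. First I would verify continuity of $a$: since $\mu_0\le\mu\le\mu_1$ a.e.\ in $\Omega$, we have $\frac{1}{\mu}\le\frac{1}{\mu_0}$, and Cauchy--Schwarz gives
\[
|a(u,v)|\le\frac{1}{\mu_0}\|\nabla u\|_{L^2(\Omega)}\|\nabla v\|_{L^2(\Omega)}\le\frac{1}{\mu_0}\|u\|_{H^1(\Omega)}\|v\|_{H^1(\Omega)}
\]
for all $u,v\in H_0^1(\Omega)$, so $a$ is bounded.

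Next I would establish coercivity. From $\mu\le\mu_1$ we obtain the lower bound $a(u,u)=\int_\Omega\frac{1}{\mu}|\nabla u|^2\,\de x\ge\frac{1}{\mu_1}\|\nabla u\|_{L^2(\Omega)}^2$. The essential ingredient is the Poincar\'e inequality $\|u\|_{L^2(\Omega)}\le C_P\|\nabla u\|_{L^2(\Omega)}$ valid for all $u\in H_0^1(\Omega)$; this holds simply because $\Omega$ is bounded and $H_0^1(\Omega)$ is by definition the closure of $C_c^\infty(\Omega)$ in $H^1(\Omega)$, so that the fractal nature of $\partial\Omega$ plays no role whatsoever. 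Combining the two estimates yields a constant $c>0$ with $a(u,u)\ge c\,\|u\|_{H^1(\Omega)}^2$ for every $u\in H_0^1(\Omega)$, which is the required coercivity.

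It then remains to observe that the right-hand side defines a continuous linear functional $F(v)=\int_\Omega J\,v\,\de x$ on $H_0^1(\Omega)$: indeed, by Cauchy--Schwarz $|F(v)|\le\|J\|_{L^2(\Omega)}\|v\|_{L^2(\Omega)}\le\|J\|_{L^2(\Omega)}\|v\|_{H^1(\Omega)}$, using $J\in L^2(\Omega)$. With boundedness, coercivity and continuity of $F$ in hand, the Lax--Milgram theorem provides a unique $u\in H_0^1(\Omega)$ solving $a(u,v)=F(v)$ for all $v\in H_0^1(\Omega)$. Since $a$ is symmetric, one may equivalently characterize $u$ as the unique minimizer of the associated quadratic energy functional $\tfrac12 a(v,v)-F(v)$.

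I do not expect any genuine obstacle here, as the argument is entirely standard once the Hilbert-space framework is fixed. The only point meriting care is the justification of the Poincar\'e inequality on the irregular domain $\Omega$; this is resolved by noting that the inequality for $H_0^1$ depends only on the boundedness of $\Omega$ and on the definition of $H_0^1(\Omega)$ as a closure of compactly supported smooth functions, and is therefore insensitive to the geometry of the Koch-type boundary $F$.
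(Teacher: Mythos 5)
Your argument is correct and coincides with the paper's (implicitly invoked) approach: the proposition is the standard Lax--Milgram application, with continuity and coercivity of $a$ following from the two-sided bounds $\mu_0\le\mu\le\mu_1$ and the Poincar\'e inequality, which --- as you rightly note --- holds on $H_0^1(\Omega)$ for any bounded open set regardless of the fractal boundary. Nothing is missing.
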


We recall some regularity results for the weak solution of problem $(\bar P)$.

\begin{prop}\label{propreg1}
Suppose that $\mu$ is constant. Then the weak solution $u$ of problem $(\bar P)$ belongs to $W^{1,3}_0(\Omega)\cap C^{0,1/3}(\overline\Omega)$. Moreover $\nabla^2 u\in L^2(\Omega, d)$, where $d$ is the distance from the boundary. In particular it follows that ${\bf B}\in (L^3(Q))^3$.
\end{prop}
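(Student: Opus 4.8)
The plan is to reduce $(\bar P)$ to a Poisson problem and then to read off integrability and Hölder information from the self-similar corner structure of $F$. Since $\mu$ is constant, the weak solution $u\in H^1_0(\Omega)$ (which exists and is unique by Proposition \ref{propex}) solves $-\Delta u=\mu J$ in $\Omega$ with $u=0$ on $\partial\Omega$ and $J\in L^2(\Omega)$. I would first dispose of the ``interior'' part by a Newtonian potential splitting: extending $J$ by zero to $\R^2$ and convolving with the fundamental solution yields $v$ with $-\Delta v=\mu J$ on $\R^2$ and, by Calderón--Zygmund, $v\in W^{2,2}(\R^2)$; in two dimensions this gives $\nabla v\in W^{1,2}(\R^2)\hookrightarrow L^q(\R^2)$ for every $q<\infty$, in particular $\nabla v\in L^3$. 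The corrector $h:=u-v$ is harmonic in $\Omega$ with boundary datum $-v|_{\partial\Omega}$, so the entire difficulty is transferred to the regularity of a harmonic function up to the fractal boundary, where the re-entrant corners of $F$ still force singular behaviour.

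The core step is to prove $\nabla u\in L^3(\Omega)$. Near each re-entrant vertex of the approximating polygon $F_n$ the interior opening equals $\omega=\tfrac{4\pi}{3}$, so the associated Grisvard singular function behaves like $r^{\pi/\omega}=r^{3/4}$ and its gradient like $r^{-1/4}$, which in isolation would even give $\nabla u\in L^p_{loc}$ for all $p<8$. The essential point is that such corners occur at every scale $3^{-n}$ and accumulate densely along $F$, so a scale-uniform estimate is required. I would run a Kondratiev/Grisvard-type weighted decomposition on each polygon $\Omega_n$, track the singular coefficients through the contraction ratio $\tfrac13$, and sum the resulting geometric series over generations; the self-similar structure (fixed ratio $\tfrac13$, fixed opening $\tfrac{4\pi}{3}$) is expected to fix the borderline integrability exponent at $3$ and thereby the Hölder exponent at $\tfrac13$. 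Making this summation rigorous and \emph{uniform in $n$}, so that the bound survives the passage to the fractal limit $\Omega=\bigcup_n\Omega_n$, is the step I expect to be the main obstacle, and it is here that one either carries out the sharp self-similar analysis or appeals directly to the known regularity theory for the Dirichlet Laplacian on the Koch snowflake.

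Once $u\in W^{1,3}_0(\Omega)$ is in hand the Hölder statement is immediate. The snowflake domain $\Omega$ is a bounded $(\epsilon,\delta)$-domain and hence a Sobolev extension domain, so Morrey's embedding $W^{1,3}(\R^2)\hookrightarrow C^{0,1-2/3}(\R^2)=C^{0,1/3}$ restricts to give $u\in C^{0,1/3}(\overline\Omega)$; the zero trace together with $\nabla u\in L^3(\Omega)$ places $u$ in $W^{1,3}_0(\Omega)$. For the weighted second-order bound I would again localize at the re-entrant vertices: the singular part $S\sim r^{3/4}$ has $\nabla^2 S\sim r^{-5/4}$, so $|\nabla^2 S|^2\,d\sim r^{-5/2}\cdot r$ and, integrating against the planar area element $r\,\de r\,\de\theta$, one is left with $\int_0^{r_0} r^{-1/2}\,\de r<\infty$. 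Thus the weight $d$ exactly compensates the singularity, precisely because $\omega<2\pi$; summing over the same self-similar family of corners yields $\nabla^2 u\in L^2(\Omega,d)$.

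Finally, the statement about the magnetic induction is a direct consequence. Since ${\bf B}=(u_{x_2},-u_{x_1},0)$ is independent of $x_3$ and $|{\bf B}|=|\nabla u|$ pointwise on $\Omega$, Fubini over $Q=\Omega\times I$ with $|I|=1$ turns $\nabla u\in L^3(\Omega)$ into ${\bf B}\in (L^3(Q))^3$, which completes the argument. In summary, the reduction, the Morrey and weighted-singularity bookkeeping, and the deduction for ${\bf B}$ are routine once the single hard ingredient—the scale-uniform $L^3$ gradient bound reflecting the self-similar accumulation of $\tfrac{4\pi}{3}$-corners along $F$—is established.
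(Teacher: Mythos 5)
Your peripheral steps are essentially sound: with $\mu$ constant the problem is the Poisson--Dirichlet problem, the Newtonian-potential splitting is standard (though note $v\in W^{2,2}(\R^2)$ is false globally, since the logarithmic potential of a compactly supported $L^2$ density grows at infinity; you need a cutoff or $W^{2,2}_{\loc}$, which suffices), the snowflake domain is indeed an $(\varepsilon,\delta)$-domain so Jones extension plus Morrey with $p=3>2$ legitimately converts $W^{1,3}$ into $C^{0,1/3}(\overline\Omega)$, the Fubini argument for ${\bf B}\in (L^3(Q))^3$ is correct, and your single-corner computation for the weighted Hessian bound converges (though the paper's convention for $L^2(\Omega,d)$ puts $d(x,\partial\Omega)^2$, not $d$, in the integrand; both give a convergent $\int_0 r^{a}\,\de r$ at one corner).

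The genuine gap is exactly the step you flagged, and it is worth being precise about why your proposed mechanism cannot close it. A single re-entrant corner of opening $\omega=4\pi/3$ gives singular exponent $\pi/\omega=3/4$, hence $\nabla u\in L^p_{\loc}$ for $p<8$ and H\"older exponent $3/4$ --- and these are precisely the \emph{pre-fractal} exponents of Theorem \ref{pesata} ($u_n\in H^s(\Omega_n)$ for $s<7/4$, $u_n\in C^{0,3/4-\varepsilon}$). The fractal exponents $3$ and $1/3$ cannot be produced by any corner-by-corner Kondrat'ev/Grisvard bookkeeping: $\pi/\omega=1/3$ would require a planar opening $\omega=3\pi>2\pi$, so the degradation from $3/4$ to $1/3$ is an intrinsically multi-scale effect of infinitely many interacting corners, not a geometric series whose ratio one reads off from the contraction factor $1/3$; moreover the singular coefficients at generation-$k$ corners depend on the solution itself, and nothing in the paper (Theorem \ref{convergenza} gives only $H^1_0(\Omega)$ convergence of $u_n$ to $u$) supplies the uniform-in-$n$ $W^{1,3}$ bounds you would need to pass to the limit $\Omega=\bigcup_n\Omega_n$. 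The paper does not attempt any of this: its entire ``proof'' of Proposition \ref{propreg1} is a citation of Theorem 1.3 (part B) and Proposition 7.1 of Nystr\"om's thesis \cite{Ny} (see also \cite{Ny96}), whose argument proceeds through Green-function and harmonic-measure estimates specific to the snowflake geometry and is, as the paper itself remarks, very technical. So your fallback option --- ``appeal directly to the known regularity theory for the Dirichlet Laplacian on the Koch snowflake'' --- is not an alternative to the intended proof; it \emph{is} the intended proof, while your attempted self-contained route stalls at the one step that constitutes the actual content of Nystr\"om's theorem.
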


Here $W^{1,3}_0(\Omega)$ and $C^{0,1/3}(\overline\Omega)$ denote respectively the usual Sobolev space and the space of H\"older continuous functions of exponent $\frac{1}{3}$, while $\nabla^2 u$ denotes the Hessian of $u$. The statement $\nabla^2 u\in L^2(\Omega, d)$ means that 
\begin{equation*}
\int_\Omega |\nabla^2 u|^2 d(x,\partial\Omega)^2\,\de x<\infty.
\end{equation*}
For the proof of Proposition \ref{propreg1} we refer to Theorem 1.3 (part B) and Proposition 7.1 in \cite{Ny} (which is also related to  \cite{Ny96}).
{These references also explain the appearance of the exponents $1/3$  and $3$ in this proposition in relation to the geometry of the Koch snowflake. The proof of Nystrom's result is very technical and it is strictly
related to the Koch snowflake and to certain sophisticated estimates.  We
mentioned this result only for the sake of completeness, since we do not
use it for our results in the paper and do not need this type of a deeper analysis.}

We now consider the approximating problems on the pre-fractal domains $\Omega_n$ introduced in Section \ref{geometria}.

Let us assume that $\mu$ is a positive constant and $J\in L^2(\Omega)$. For every fixed $n\in \N$, we consider the following problems $(\bar P_n)$:

\begin{equation}\label{sistema 2d-h}
(\bar P_n)\begin{cases}
-\dive \left(\frac{1}{\mu}\nabla u_n\right)=J\quad &\text{in}\,\, \Omega_n,\\[2mm]
u_n=0 &\text{on}\,\,\partial\Omega_n.
\end{cases}
\end{equation}

We set $H^1_0(\Omega_n):=\overline{ \{ w\in C^1_0(\Omega): \supp w\subset \Omega_n \}}^{H^1(\Omega)}$. For every $u_n,v\in H^1_0(\Omega_n)$, let $$a_n(u_n,v)=\int_{\Omega_n}\frac{1}{\mu} \nabla u_n\nabla v \,\de x $$ be the energy form associated with problem $(\bar P_n)$.

\begin{prop}\label{propexn}
For every given $J\in L^2(\Omega)$ there exists a unique weak solution $u_n\in H^1_0(\Omega_n)$ of problem $(\bar P_n)$.
\end{prop}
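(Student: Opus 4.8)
The plan is to obtain both existence and uniqueness from the Lax--Milgram theorem applied in the Hilbert space $H^1_0(\Omega_n)$, in complete analogy with the proof of Proposition \ref{propex}. First I would record that $H^1_0(\Omega_n)$, being by definition a closed subspace of $H^1(\Omega)$, is itself a Hilbert space with the $H^1$ inner product, and that the weak formulation amounts to finding $u_n\in H^1_0(\Omega_n)$ with $a_n(u_n,v)=f_n(v)$ for all $v\in H^1_0(\Omega_n)$, where $f_n(v):=\int_{\Omega_n}Jv\,\de x$. It therefore suffices to verify the three hypotheses of Lax--Milgram: continuity of the bilinear form $a_n$, continuity of the linear functional $f_n$, and coercivity of $a_n$.

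Continuity of $a_n$ and $f_n$ is immediate. Since $\mu$ is a positive constant, the Cauchy--Schwarz inequality gives
\[
|a_n(u_n,v)|\leq \frac{1}{\mu}\|\nabla u_n\|_{L^2(\Omega_n)}\|\nabla v\|_{L^2(\Omega_n)}\leq \frac{1}{\mu}\|u_n\|_{H^1(\Omega)}\|v\|_{H^1(\Omega)},
\]
and likewise $|f_n(v)|\leq \|J\|_{L^2(\Omega)}\|v\|_{H^1(\Omega)}$, so that $f_n\in (H^1_0(\Omega_n))'$.

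The only substantive point is coercivity, and here the boundedness of the domain is decisive. Because $\Omega_n$ is a bounded open set, the Poincar\'e inequality holds on $H^1_0(\Omega_n)$: there is a constant $C_P=C_P(\Omega_n)>0$ with $\|v\|_{L^2(\Omega_n)}\leq C_P\|\nabla v\|_{L^2(\Omega_n)}$ for every $v\in H^1_0(\Omega_n)$. Consequently
\[
a_n(v,v)=\frac{1}{\mu}\|\nabla v\|_{L^2(\Omega_n)}^2\geq \frac{1}{\mu(1+C_P^2)}\|v\|_{H^1(\Omega)}^2,
\]
which is the desired coercivity bound. Invoking Lax--Milgram then yields a unique $u_n\in H^1_0(\Omega_n)$ solving the weak problem, completing the argument.

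I would not expect any genuine obstacle. Unlike the fractal problem $(\bar P)$, the domain $\Omega_n$ is a bounded Lipschitz (polygonal) domain, so the Poincar\'e inequality is classical and no trace or extension subtleties enter; the proof is essentially identical to that of Proposition \ref{propex}, with $\Omega$ replaced by $\Omega_n$.
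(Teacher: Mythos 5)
Your proof is correct and follows exactly the route the paper intends: the paper offers no separate argument for Proposition \ref{propexn} beyond the standard Lax--Milgram theorem applied to the form $a_n$ on the Hilbert space $H^1_0(\Omega_n)$, with coercivity from the Poincar\'e inequality, just as for Proposition \ref{propex}. Your verification of continuity and coercivity (noting that $v\in H^1_0(\Omega_n)$ vanishes outside $\Omega_n$, so the $H^1(\Omega)$ and $H^1(\Omega_n)$ norms agree) fills in precisely the details the paper leaves implicit.
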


The following result states the convergence of the pre-fractal solutions $u_n$ to the solution $u$ of problem $(\bar P)$ in a suitable sense. We recall that, for any compact subset $E\subset \Omega$, its relative capacity with respect to $\Omega$ is defined by
\[\capac_{2,\Omega}(E)=\inf\left\lbrace \left\|\varphi\right\|_{H^1(\Omega)}^2: \text{$\varphi\in C_c^\infty(\Omega)$ and $\varphi\geq 1$ on $E$}\right\rbrace,\]
see \cite[p. 531]{mosco1}.

\begin{theorem}\label{convergenza} Let $u$ and $u_n$ be the solutions of problems $(\bar P)$ and $(\bar P_n)$ respectively. Then $u_n$ strongly converges to $u$ in $H^1_0(\Omega)$ as $n\rightarrow \infty$.
\end{theorem}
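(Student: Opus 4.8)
The plan is to recognize each $u_n$ as a Galerkin, i.e. energy-orthogonal, approximation of $u$ on an increasing family of closed subspaces, and then to extract strong convergence from a best-approximation estimate combined with a density argument.

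First I would set $V:=H^1_0(\Omega)$ and $V_n:=H^1_0(\Omega_n)$. By the very definition of $H^1_0(\Omega_n)$ as the $H^1(\Omega)$-closure of $C^1$-functions whose support lies in $\Omega_n$, each $V_n$ is a closed subspace of $V$, and since $\Omega_n\subset\Omega_{n+1}$ we have $V_n\subset V_{n+1}\subset V$. Because $\mu$ is a positive constant, $a$ and $a_n$ are restrictions of one and the same symmetric, continuous, coercive bilinear form $a(v,w)=\frac{1}{\mu}\int_\Omega\nabla v\cdot\nabla w\,\de x$ to $V$ and $V_n$, respectively: indeed, for $v\in V_n$ the trivial extension by zero satisfies $\nabla v=0$ a.e.\ on $\Omega\setminus\Omega_n$, so $\int_{\Omega_n}\nabla u_n\cdot\nabla v\,\de x=\int_\Omega\nabla u_n\cdot\nabla v\,\de x$ and likewise $\int_{\Omega_n}Jv\,\de x=\int_\Omega Jv\,\de x$. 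Subtracting the weak formulations of $(\bar P)$ and $(\bar P_n)$ then gives the Galerkin orthogonality
\[
a(u-u_n,v)=0\qquad\text{for all }v\in V_n,
\]
so that $u_n$ is exactly the $a$-orthogonal projection of $u$ onto $V_n$.

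From here the argument is standard. By coercivity and continuity of $a$ the energy norm $\|w\|_a:=a(w,w)^{1/2}$ is equivalent to the $H^1_0(\Omega)$-norm (Poincar\'e), and the projection property yields the best-approximation identity $\|u-u_n\|_a=\min_{v\in V_n}\|u-v\|_a$. Since the subspaces increase, $\|u-u_n\|_a$ is nonincreasing and $u_n\to P_\infty u$ in energy norm, where $P_\infty$ is the $a$-orthogonal projection onto $\overline{\bigcup_n V_n}$. Hence strong convergence $u_n\to u$ in $H^1_0(\Omega)$ is equivalent to the density
\[
\overline{\bigcup_{n}V_n}^{\,H^1_0(\Omega)}=V=H^1_0(\Omega).
\]

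The crux, and the step I expect to be the main obstacle, is this density. I would establish it using the monotone exhaustion $\Omega=\bigcup_n\Omega_n$ with $\Omega_n\subset\Omega_{n+1}$, already recorded in the proof of Theorem~\ref{stokes} via $Q=\bigcup_nQ_n$: any $w\in C_c^\infty(\Omega)$ has compact support, which by this exhaustion is contained in some $\Omega_n$, whence $w\in V_n$; therefore $C_c^\infty(\Omega)\subset\bigcup_nV_n$, and since $C_c^\infty(\Omega)$ is dense in $H^1_0(\Omega)$ the claim follows. Conceptually this density is precisely what the relative capacity $\capac_{2,\Omega}$ controls: it amounts to the Mosco $M$-convergence of the energy forms of $(\bar P_n)$ to that of $(\bar P)$, hence to strong convergence of the associated solutions, in the spirit of \cite{mosco1}. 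The delicate point, which for the snowflake is guaranteed exactly by the nested polygonal approximation, is that the fractal boundary $F$ be reached by the pre-fractal boundaries $F_n$ without leaving behind a set of positive relative capacity; once this is secured the best-approximation bound forces $\|u-u_n\|_a\to0$, and thus $u_n\to u$ strongly in $H^1_0(\Omega)$.
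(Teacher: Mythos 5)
Your proof is correct, and it takes a genuinely different route from the paper's. The paper disposes of Theorem \ref{convergenza} in one sentence, invoking Mosco's convergence theory \cite{mosco1} on the grounds that $\Omega_n$ is an increasing sequence invading $\Omega$ and that $\capac_{2,\Omega}(\Omega'\setminus\Omega_n)\to 0$ for every compact $\Omega'\subset\Omega$. You instead give a self-contained Galerkin argument: since $\mu$ is constant (as the paper assumes for $(\bar P_n)$) and the datum $J$ is the same on both sides, and since elements of $V_n=H^1_0(\Omega_n)$ are, by the paper's very definition, functions on all of $\Omega$ vanishing together with their gradients a.e.\ off $\overline{\Omega_n}$ (with $\partial\Omega_n$ Lebesgue-null), subtracting the two weak formulations does yield $a(u-u_n,v)=0$ for all $v\in V_n$, so $u_n$ is the $a$-orthogonal projection of $u$ onto $V_n$; strong convergence then reduces to density of $\bigcup_n V_n$ in $H^1_0(\Omega)$, which follows from the compact-exhaustion property $\Omega=\bigcup_n\Omega_n$ exactly as you argue. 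The two routes rest on the same geometric fact, and your closing observation is on target, though it slightly overdramatizes the issue: for a monotone invading family the capacity condition of \cite{mosco1} is automatic, because $\Omega'\setminus\Omega_n$ is eventually empty for each compact $\Omega'\subset\Omega$, so no fractal-specific capacity estimate is actually needed. What your argument buys is transparency and elementarity --- projections onto nested closed subspaces converge strongly to the projection onto the closed union, and density eliminates the limiting defect --- at the price of relying on linearity, symmetry, and the exact matching of forms and right-hand sides across $n$. What the Mosco machinery the paper cites buys is robustness: it covers non-monotone domain perturbations (where the capacity condition is the real hypothesis), forms and data varying with $n$, and variational inequalities, situations in which your projection identity is no longer available.
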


\begin{proof} The result follows from \cite{mosco1} since $\Omega_n$ is an increasing sequence of sets invading $\Omega$ and $\capac_{2,\Omega}(\Omega^{\prime} \setminus \Omega_n)\rightarrow 0$ when $n\rightarrow \infty$ for any compact subset $\Omega^\prime$ of $\Omega$.
\end{proof}

\section{Numerical approximation in 2D}
\setcounter{equation}{0}

In this section we perform a numerical approximation of problem $(P)$ by a finite element method. For the sake of simplicity, we put $\mu=1$. Hence problem $(\bar P_n)$ reduces to the following form:
\begin{equation}\label{sistema 2d-h-1}
(\tilde P_n)\begin{cases}
-\Delta u_n=J\quad &\text{in}\,\, \Omega_n,\\[2mm]
u_n=0 &\text{on}\,\,\partial\Omega_n.
\end{cases}
\end{equation}

In order to obtain the optimal rate of convergence of the numerical scheme, we use the theory of regularity in weighted Sobolev spaces developed by Grisvard. Let us introduce the weighted Sobolev space
$$H^2_\eta({\Omega_n})= \{v\in H^1(\Omega_n)\,:\,r^{\eta}\,D^{\beta}v\in L^2(\Omega_n),\,|\beta|=2\},$$

where $r=r(x)$ is the distance from the vertices of $\partial\Omega_n$ whose angles are "reentrant". This space is endowed with the norm 
\begin{center}
$\displaystyle\|u\|_{H^2_\eta
(\Omega)}=\left(\|u\|^2_{H^1(\Omega)}+\sum_{|\beta|=2} \int_{\Omega}
r^{2\eta} |D^{\beta} u(x)|^2\,\de x\right)^{\frac{1}{2}}$.
\end{center}

From Kondrat'ev results {\cite{kond}, \cite[Proposition 4.15]{jerison}} and Sobolev embedding theorem we deduce the following.
\begin{theorem}\label{pesata} Let $u_n$ be the weak solution of problem $(\tilde P_n)$. Then $u_n\in H^2_\eta(\Omega_n)$ for $\eta>\frac{1}{4}$. Moreover, $u_n\in H^s(\Omega_n)$ for $s<\frac{7}{4}$ and $u_n\in C^{0,\delta}(\overline\Omega_n)$ for $\delta=\frac{3}{4}-\varepsilon$ for every $\varepsilon>0$.
\end{theorem}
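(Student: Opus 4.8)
The plan is to reduce the statement to the classical Kondrat'ev--Grisvard theory of elliptic boundary value problems on plane domains with corners, the only geometric input being the list of opening angles of $\partial\Omega_n=F_n$. First I would record that $\Omega_n$ is a bounded, simply connected \emph{polygon} with finitely many vertices, and that by the very construction of the Koch prefractal every vertex of $F_n$ is of one of two types: a \emph{convex} corner with interior opening $\frac{\pi}{3}$ (the apices of the triangular bumps together with the three vertices $A,B,C$ of the generating triangle) or a \emph{reentrant} corner with interior opening $\omega=\frac{4\pi}{3}$ (the base points at which a bump is attached). These two values are stable under the iteration, so $\omega=\frac{4\pi}{3}$ is the maximal opening angle for \emph{every} $n$, and consequently the leading singular exponent is the same for all $n$, namely $\lambda:=\frac{\pi}{\omega}=\frac34$.

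Next I would invoke the Kondrat'ev decomposition \cite{kond} (see also \cite[Proposition 4.15]{jerison}) for the Dirichlet problem $-\Delta u_n=J$ with $J\in L^2(\Omega_n)$. It yields a splitting $u_n=u_{\mathrm{reg}}+\sum_j c_j\,\chi_j\,r_j^{\lambda}\sin(\lambda\theta_j)$, where $u_{\mathrm{reg}}\in H^2(\Omega_n)$, the sum runs over the reentrant vertices, $(r_j,\theta_j)$ are local polar coordinates centred at the $j$-th such vertex, $\chi_j$ is a smooth cut-off, and $c_j\in\R$. Only reentrant corners produce a singular term, since at a convex corner the corresponding exponent $\pi/(\pi/3)=3$ exceeds $1$ and the associated function already lies in $H^2$. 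Thus all the non-$H^2$ behaviour of $u_n$ is carried by finitely many explicit functions behaving like $r^{3/4}$ near an isolated point.

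From this explicit model the three assertions follow by direct computation. For the weighted statement, a second derivative of $r^{\lambda}\sin(\lambda\theta)$ scales like $r^{\lambda-2}$, so in polar coordinates $\int r^{2\eta}|D^\beta(r^{\lambda}\sin(\lambda\theta))|^2\,\de x\sim\int_0^{\delta} r^{2\eta+2\lambda-3}\,\de r$, which converges precisely when $2\eta+2\lambda-3>-1$, i.e. $\eta>1-\lambda=\frac14$; combined with $u_{\mathrm{reg}}\in H^2(\Omega_n)\subset H^2_\eta(\Omega_n)$ (the weight $r^\eta$ being bounded) this gives $u_n\in H^2_\eta(\Omega_n)$ for every $\eta>\frac14$. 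For the fractional statement I would use the standard fact that, in two dimensions, $\chi\,r^{\lambda}\in H^s$ if and only if $s<\lambda+1$; here $\lambda+1=\frac74$, whence $u_n\in H^s(\Omega_n)$ for all $s<\frac74$. Finally the H\"older statement follows from the Sobolev embedding $H^s(\Omega_n)\hookrightarrow C^{0,s-1}(\overline\Omega_n)$ valid for $1<s<2$: taking $s=\frac74-\varepsilon'$ yields $u_n\in C^{0,\delta}(\overline\Omega_n)$ with $\delta=\frac34-\varepsilon$ (equivalently, one reads the optimal H\"older exponent $\frac34$ directly off the factor $r^{3/4}$).

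The genuinely non-routine part of the argument is the corner analysis underlying the Kondrat'ev decomposition: one must verify that the generating angles of $F_n$ are exactly $\frac{\pi}{3}$ and $\frac{4\pi}{3}$ and that no larger reentrant angle is produced at any generation, so that $\lambda=\frac34$ is the correct and $n$-independent threshold. Everything after that is bookkeeping with the strict exponent thresholds $\frac14$, $\frac74$ and $\frac34$; the only point requiring a little care is the borderline nature of these inequalities, which is exactly why the statement is phrased with strict inequalities and with the loss $\varepsilon$ in the H\"older exponent.
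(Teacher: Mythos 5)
Your proposal is correct and takes essentially the same route as the paper, whose entire proof consists of citing Kondrat'ev's results \cite{kond}, \cite[Proposition 4.15]{jerison} and the Sobolev embedding theorem; you have simply supplied the details those citations compress. In particular your geometric input is accurate: at every generation the vertices of $F_n$ have interior angle $\frac{\pi}{3}$ or $\frac{4\pi}{3}$, so the leading singular exponent is $\lambda=\pi/\left(\frac{4\pi}{3}\right)=\frac{3}{4}$, and your computations with $r^{\lambda}\sin(\lambda\theta)$ yield exactly the thresholds $\eta>\frac{1}{4}$, $s<\frac{7}{4}$ and $\delta=\frac{3}{4}-\varepsilon$ asserted in the statement.
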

\medskip

We point out that $u_n\notin H^2(\Omega_n)$ since it has a singular behavior in small neighborhoods of the reentrant corners of $\partial\Omega_n$. Hence we have to construct a suitable mesh compliant with the so-called \emph{Grisvard conditions} \cite{grisvard} in order to obtain the optimal rate of convergence. We refer to \cite{Ce-Dl-La} and \cite{cefalolancia}, where such mesh algorithm was developed (see \cite{Ce-La-Hd} for the case of fractal mixtures). We point out that this mesh algorithm produces a sequence of nested refinements.

The mesh refinement process generates a \emph{conformal} and \emph{regular} family of triangulations $\{T_{n,h}\}$, where $h=\max\{\diam(S),\,S\in T_{n,h}\}$ is the size of the triangulation, which is also compliant with the Grisvard conditions (see Section 5 in \cite{cefalolancia} for the case of interest). We define the finite dimensional space of piecewise linear functions
\begin{center}
$X_{n,h}:=\{v\in C^0 (\overline{\Omega_n})\,:\,v|_{\mathcal{T}}\in\mathbb{P}_1\,\,\forall\,\mathcal{T}\in T_{n,h}\}$.
\end{center}
We set $V_{n,h}:=X_{n,h}\cap H^1_0 (\Omega_n)$. Hence $V_{n,h}$ is a finite dimensional space of dimension $N_h=\{$number of inner nodes of $T_{n,h}\}$. The discrete approximation problem is the following:\\
given $J\in L^2 (\Omega_n)$, find $u_{n,h}\in V_{n,h}$ such that
\begin{equation}\label{semidiscreto}
(\nabla u_{n,h},\nabla v_h)_{L^2(\Omega_n)}=(J,v_h)_{L^2(\Omega_n)}\quad\forall\,v_h\in V_{n,h}.
\end{equation}
\noindent The existence and uniqueness of the semi-discrete solution $u_{n,h}\in V_{n,h}$ of the variational problem \eqref{semidiscreto} follows from the Lax-Milgram theorem (see e.g. \cite{quarval}). 

\begin{theorem}\label{stimaerrore} Let $u_n$ be the solution of problem $(\tilde P_n)$ and $u_{n,h}$ be the solution of the discrete problem \eqref{semidiscreto}. Then
\begin{equation}\label{fordis}
\|u_n-u_{n,h}\|_{H^1_0(\Omega_n)}^2\leq C\,h^2\|J\|^2_{L^2(\Omega_n)},
\end{equation}
where $C$ is a suitable constant independent of $h$.
\end{theorem}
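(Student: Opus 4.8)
The plan is to follow the classical route of \emph{a priori} finite element analysis: reduce the Galerkin error to a best-approximation error via C\'ea's lemma, and then control the best-approximation error by a suitable interpolation error, exploiting the weighted regularity of Theorem \ref{pesata} together with the grading of the family $\{T_{n,h}\}$. Since $\mu=1$, the bilinear form in \eqref{semidiscreto} is just $(\nabla u_n,\nabla v)_{L^2(\Omega_n)}$, which is symmetric, continuous, and coercive on $H^1_0(\Omega_n)$ by the Poincar\'e inequality on the bounded set $\Omega_n$. As $V_{n,h}\subset H^1_0(\Omega_n)$ and the discrete problem is the Galerkin restriction of the continuous one, Galerkin orthogonality holds and C\'ea's lemma gives the quasi-optimality bound
\[
\|u_n-u_{n,h}\|_{H^1_0(\Omega_n)}\leq C\inf_{v_h\in V_{n,h}}\|u_n-v_h\|_{H^1_0(\Omega_n)}.
\]
It then suffices to exhibit one good competitor $v_h$, and I would take $v_h=\Pi_h u_n$, the Lagrange (nodal) interpolant of $u_n$ onto $X_{n,h}$. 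This interpolant is well defined because, by Theorem \ref{pesata}, $u_n\in C^{0,\delta}(\overline{\Omega_n})$, and it lies in $V_{n,h}$ since $u_n$ vanishes on $\partial\Omega_n$.

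Next I would estimate $\|u_n-\Pi_h u_n\|_{H^1(\Omega_n)}$ element by element. On triangles $T$ away from the reentrant vertices, where $u_n$ is genuinely $H^2$ and the weight $r^\eta$ is bounded away from $0$, the standard scaling/Bramble--Hilbert argument yields the local estimate $\|u_n-\Pi_h u_n\|_{H^1(T)}\leq C\,h_T\,|u_n|_{H^2(T)}$. The difficulty is concentrated near the reentrant corners of $\partial\Omega_n$, where $u_n\notin H^2$ and $\nabla^2 u_n$ blows up like $r^{3/4-2}$; there the plain $H^2$ estimate fails and must be replaced by its weighted counterpart. On an element $T$ at distance comparable to $r_T$ from a singular vertex one has $|u_n|_{H^2(T)}\approx r_T^{-\eta}\,|u_n|_{H^2_\eta(T)}$, so the local interpolation error is bounded by $C\,h_T\,r_T^{-\eta}\,|u_n|_{H^2_\eta(T)}$.

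This is exactly where the mesh grading enters, and it is the technical heart of the argument. The triangulation of \cite{cefalolancia} is refined towards the reentrant vertices in compliance with the Grisvard conditions tied to the weight exponent $\eta>\tfrac14$ of Theorem \ref{pesata}, so that on elements away from the innermost layer one has $h_T\lesssim h\,r_T^{\,\eta}$, whence $h_T\,r_T^{-\eta}\lesssim h$. The finitely many elements abutting each singular vertex, where $r_T$ is not bounded below, must be handled separately by a direct weighted interpolation estimate. Inserting the grading into the weighted local bounds and summing $\sum_T\|u_n-\Pi_h u_n\|_{H^1(T)}^2$ over all triangles then recovers the optimal order,
\[
\|u_n-\Pi_h u_n\|_{H^1(\Omega_n)}\leq C\,h\,\|u_n\|_{H^2_\eta(\Omega_n)}.
\]

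Finally I would close the estimate using the a priori bound $\|u_n\|_{H^2_\eta(\Omega_n)}\leq C\|J\|_{L^2(\Omega_n)}$, which follows from the Kondrat'ev-type weighted regularity underlying Theorem \ref{pesata} combined with the energy estimate $\|u_n\|_{H^1_0(\Omega_n)}\leq C\|J\|_{L^2(\Omega_n)}$ for the weak solution of $(\tilde P_n)$. Chaining C\'ea's lemma, the graded-mesh interpolation estimate and this bound, and then squaring, gives \eqref{fordis} with $C$ depending only on the shape-regularity of $\{T_{n,h}\}$ and on the grading parameter, hence independent of $h$. I expect the genuine obstacle to be the middle step: verifying that the grading prescribed by the Grisvard conditions precisely compensates the factor $r_T^{-\eta}$, so that the singular elements contribute at the same $O(h)$ order as the regular ones; the remaining steps are routine once quasi-optimality and the weighted regularity are in hand.
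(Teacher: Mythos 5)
Your proposal is correct and coincides with the paper's proof, which consists of a citation to Theorem 8.4.1.6 in \cite{grisvard}: that theorem is proved exactly by the chain you describe, namely C\'ea's lemma, local weighted interpolation estimates with the grading $h_T\lesssim h\,r_T^{\eta}$ (plus the separate treatment $h_T\lesssim h^{1/(1-\eta)}$ of the corner elements), and the Kondrat'ev a priori bound $\|u_n\|_{H^2_\eta(\Omega_n)}\leq C\|J\|_{L^2(\Omega_n)}$. In effect you have unpacked the argument the paper delegates to Grisvard, with the Grisvard mesh conditions guaranteed here by the construction of \cite{cefalolancia}.
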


\medskip

{For the proof, see Theorem 8.4.1.6 in \cite{grisvard}.}

We now show some numerical simulations for problem $(\tilde P_n)$. We choose the source $J$ as follows:
\begin{center}
$J(x_1,x_2) = 10^5 \ e^{-{5} ((x_1-\bar{x}_1)^2+(x_2-\bar{x}_2)^2)}$,
\end{center}
where $(\bar{x}_1,\bar{x}_2)$ are the center coordinates of the domain.

\begin{figure}[h]
    \centerline{\includegraphics[scale=.5]{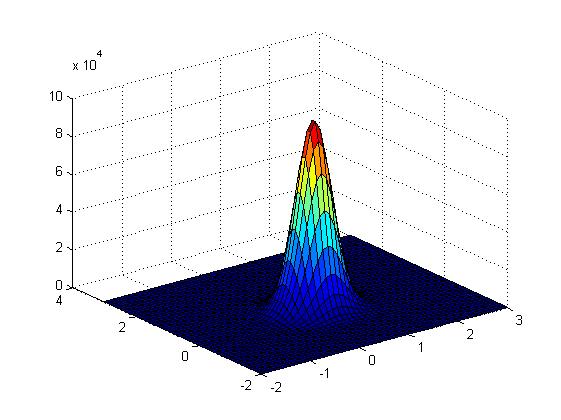}}
	\caption{The source $J$.}
\end{figure}

In our simulations, $\Omega_0$ is the circle of radius $\frac{1}{2}$, while $\Omega_n$, $n=1,\dots,5$, are the domains having as boundary the $n$-th approximation of the Koch snowflake. We suppose that all the domains are centered at the same point.

Denoting by $u_n$ the solution of problem $(\tilde P_n)$, we define the vector ${\mathbf u}_n=(0,0,u_n)$ and we compute the magnetic field ${\mathbf B}$ generated by the current ${\mathbf J}:=(0,0,J(x_1,x_2))$. In other words, ${\mathbf B}=\curl {\mathbf u}_n=\nabla\times {\mathbf u}_n$.

In the following table (Table \ref{table:valori}), we write in the second column the value of the $L^\infty$-norm of ${\mathbf B}$ in $\Omega_n$, while in the third column we write the length $\ell(n)$ of the boundary $\partial\Omega_n$. In the first column, we write the domain we consider in the simulation. 

As one can notice from Table \ref{table:valori}, the magnetic field increases as the length of the boundary of the domain increases.

\medskip

\begin{table}[H]
\begin{center}
\begin{tabular}{ c | c | c | c | c }
  $\Omega_n$ & $\|{\mathbf B}\|_\infty$ & $\ell(n)$ \\
  \hline			
  $\Omega_0$ & 17.946 & $\pi$\\
	$\Omega_1$ & 26.688 & 4\\
	$\Omega_2$ & 35.575 & $\frac{16}{3}$\\
	$\Omega_3$ & 47.124 & $\frac{64}{9}$\\
	$\Omega_4$ & 63.504 & $\frac{256}{27}$\\
	$\Omega_5$ & 85.43 & $\frac{1024}{81}$
\end{tabular}
\caption{The values obtained in our simulations.}
\label{table:valori}
\end{center}
\end{table}

\begin{figure}[p]
\centering
	\includegraphics[scale=0.70]{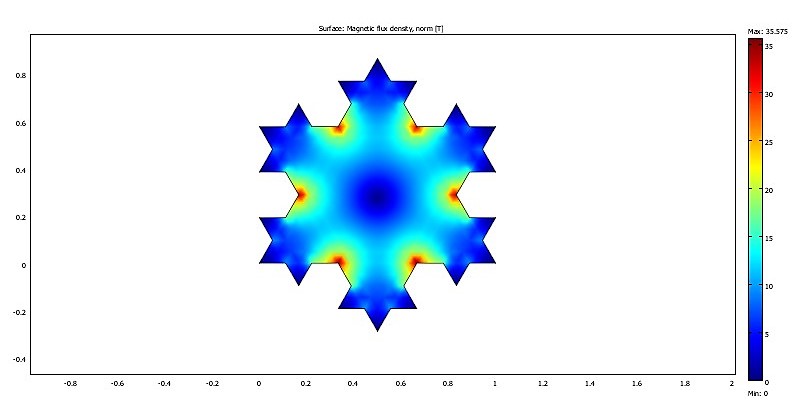}
	\includegraphics[scale=0.70]{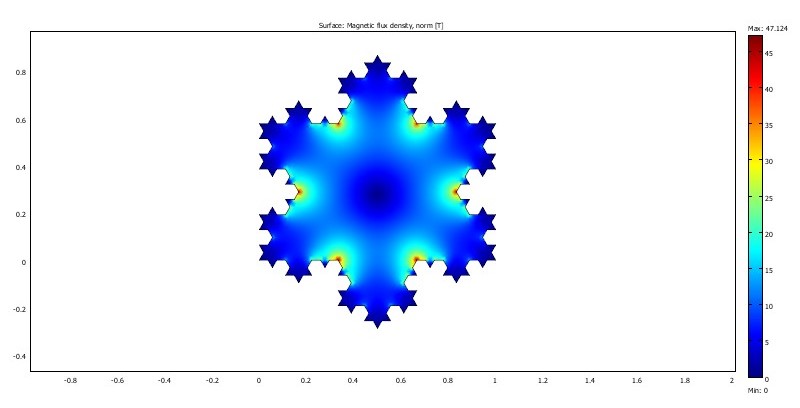}
	\includegraphics[scale=0.70]{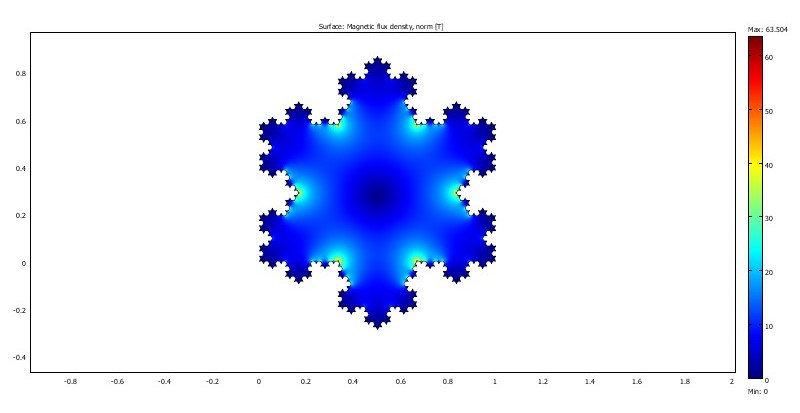}
	\caption{The magnetic field generated by $J$ in $\Omega_2$, $\Omega_3$, $\Omega_4$.}\label{fig-key}
\end{figure}


\begin{remark} 
{We  
note that our numerical results (see Figure \ref{fig-key}) compare well with numerical results 
of Lapidus et al \cite{DL07,GL97,LNRG96} on eigenfunctions of the scalar Dirichlet Laplacian in 
the Koch snowflake domain, and with some earlier physics results, such as  
\cite{sapoval1991vibrations}. 
In particular, one can expect that the localization and other properties of the electro-magnetic fields can be analyzed using similar methods as for the scalar Laplacian (see, for instance, \cite{filoche2012universal,LP95,Lap91,MV97,vdB00b,vdB00a}). 
This connection lies outside of the scope of our article and will be the subject of future research. }
\end{remark}

\noindent {\bf Acknowledgements.} S. C., M. R. L. and P. V. have been supported by the Gruppo Nazionale per l'Analisi Matematica, la Probabilit\`a e le loro Applicazioni (GNAMPA) of the Istituto Nazionale di Alta Matematica (INdAM). M. H. has been supported by the DFG IRTG 2235.
A. T. has been supported by the NSF DMS     1613025.  
The authors  thank the anonymous referee of this paper for the insightful and helpful comments and suggestions on its earlier version.


\end{document}